\def\ps@headings{%
	\def\@oddhead{\mbox{}\scriptsize\rightmark \hfil \thepage}%
	
	\def\@evenhead{\scriptsize\thepage \hfil \leftmark\mbox{}}%
	
	\def\@oddfoot{}%
	
	\def\@evenfoot{}}
\definecolor{usethiscolorhere}{rgb}{0.86666,0.78431,0.78431}
\newcommand{\cmark}{\ding{51}}%
\newcommand{\xmark}{\ding{55}}%
\newtheorem{theorem}{Theorem}
\newtheorem{lemma}[theorem]{Lemma}
\begin{document}

\title{Swarm-Net: Firmware Attestation in IoT Swarms using Graph Neural Networks and Volatile Memory}


\author{Varun Kohli$^\dagger$, Bhavya Kohli$^\dagger$, Muhammad Naveed Aman,~\IEEEmembership{Senior Member,~IEEE}, Biplab Sikdar,~\IEEEmembership{Senior Member,~IEEE}%
\thanks{V. Kohli and B. Sikdar are with the Department of Electrical and Computer Engineering, National University of Singapore, Singapore 117417. (e-mail: varun.kohli@u.nus.edu, bsikdar@nus.edu.sg).}
\thanks{B. Kohli is with the Center for Machine Intelligence and Data Science, Indian Institute of Technology Bombay, Mumbai, India 400076. (e-mail: bhavyakohli@iitb.ac.in)}
\thanks{M. N. Aman is with the School of Computing, University of Nebraska-Lincoln, Lincoln, NE 68588, USA. (e-mail: naveed.aman@unl.edu).}
\thanks{$\dagger$ marked authors have made equal contribution to this paper.}
}

 \maketitle
 
\begin{abstract}
The Internet of Things (IoT) is a network of billions of interconnected, primarily low-end embedded devices. Despite large-scale deployment, studies have highlighted critical security concerns in IoT networks, many of which stem from firmware-related issues. Furthermore, IoT swarms have become more prevalent in industries, smart homes, and agricultural applications, among others. Malicious activity on one node in a swarm can propagate to larger network sections. Although several Remote Attestation (RA) techniques have been proposed, they are limited by their latency, availability, complexity, hardware assumptions, and uncertain access to firmware copies under Intellectual Property (IP) rights. We present \emph{Swarm-Net}, a novel swarm attestation technique that exploits the inherent, interconnected, graph-like structure of IoT networks along with the runtime information stored in the Static Random Access Memory (SRAM) using Graph Neural Networks (GNN) to detect malicious firmware and its downstream effects. We also present the first datasets on SRAM-based swarm attestation encompassing different types of firmware and edge relationships. In addition, a secure swarm attestation protocol is presented. \emph{Swarm-Net} is not only computationally lightweight but also does not require a copy of the firmware. It achieves a $99.96\%$ attestation rate on authentic firmware, $100\%$ detection rate on anomalous firmware, and $99\%$ detection rate on propagated anomalies, at a communication overhead and inference latency of $\sim1$ second and $\sim10^{-5}$ seconds (on a laptop CPU), respectively. In addition to the collected datasets, \emph{Swarm-Net}'s effectiveness is evaluated on simulated trace replay, random trace perturbation, and dropped attestation responses, showing robustness against such threats. Lastly, we compare \emph{Swarm-Net} with past works and present a security analysis.

\end{abstract}

\begin{IEEEkeywords}
Internet of Things (IoT), Remote Attestation (RA), Swarm Attestation, Graph Neural Networks (GNN), Anomaly Detection, Static Random Access Memory (SRAM)
\end{IEEEkeywords}


\section{Introduction}
\label{sec:introduction}

The Internet of Things (IoT) has emerged as a leading force behind smart city initiatives for agriculture, healthcare, homes, industry, transportation security, and supply chains, owing to advances in 5G, artificial intelligence, and edge computing technologies \cite{chamola2020comprehensive}. Despite billions of dollars worth of investment worldwide \cite{forbes}, its constituent devices, which are mostly low-end embedded systems with limited computational capabilities, are often the target of various security threats \cite{hassija2019survey}. 

A recent study highlighted that 95\% of security vulnerabilities in IoT networks stem from firmware-related problems \cite{ilascu2019their}. Thus, evaluating firmware integrity is an essential part of ensuring security and trust in IoT networks. Remote Attestation (RA) has emerged as an important field of research for this purpose, and various single-node software \cite{ankergaard2021state}, hardware \cite{sfyrakis2020survey}, or hybrid \cite{johnson2021taxonomy} RA techniques have been proposed. However, their applicability to larger networks is often naive, and therefore, efficient swarm attestation approaches have also been proposed \cite{ambrosin2020collective}. Among available attestation techniques, several software-based methods rely on computationally intensive checksums over the IoT device's flash memory and offer low availability. They also require a copy of the firmware, which may not be possible under the manufacturers' Intellectual Property (IP) rights \cite{seshadri2004swatt,seshadri2006scuba,seshadri2008sake}. Further, hardware and hybrid RA techniques assume the availability of Trusted Platform Modules (TPM), Memory Protection Units (MPU), and Trusted Execution Environments (TEE), which do not apply to low-end devices \cite{agrawal2015program,brasser2015tytan,kibret2023property}. Lastly, studies based on network-flow information have limited detection capabilities against malicious firmware with normal network activity \cite{protogerou2021graph}. 

A recent paper \cite{aman2022machine} made minimal hardware assumptions on IoT devices and used Static Random Access Memory (SRAM) as a feature for machine learning-based attestation. However, their method targeted single-node RA, used the entire SRAM, and was limited to a few anomalous firmware classes in a classification-based approach. We show that the SRAM data section (a smaller part of the SRAM) is useful for detecting nodes infected by malicious firmware and propagating anomalies when combined with Graph Neural Networks (GNNs) for anomaly detection. In addition, the SRAM is significantly smaller than flash memory and, thus, easier to traverse. It captures runtime information, can indicate roving malware, and eliminates the need for firmware copies during attestation. Furthermore, 

\begin{table*}[t]
\centering
\caption{Qualitative comparison of related works on single-node and swarm attestation.}
\label{tab:related}
\resizebox{\textwidth}{!}{
\begin{tabular}{|l|c|c|c|c|c|c|c|c|c|c|c|c|}
\hline
\textbf{Evaluation Criteria}             & \textbf{\cite{seshadri2004swatt,seshadri2006scuba,seshadri2008sake}} & \textbf{\cite{aman2022machine}} & \textbf{\cite{krauss2007detecting,agrawal2015program,tan2011tpm}} & \textbf{\cite{eldefrawy2012smart,koeberl2014trustlite,brasser2015tytan}} & \textbf{HAtt \cite{aman2020hatt}} & \textbf{\cite{khodari2019decentralized,asokan2015seda,carpent2017lightweight,visintin2019safe,kuang2019esdra,dushku2020sara,ibrahim2019healed}} & \textbf{WISE \cite{ammar2018wise}} & \textbf{FeSA \cite{kuang2022fesa}} & \textbf{\cite{protogerou2021graph}} & \textbf{RAGE \cite{chilese2024one}} & \textbf{\cite{kibret2023property}} & \textbf{Swarm-Net}    \\ \hline\hline

Target                        & Single                                                                                & Single                                               & Single                                                                              & Single                                                                                    & Single                                        & Swarm                                                                                                                                                           & Swarm                                         & Swarm                                                  & Swarm                                                  & Swarm                                                  & Swarm                                                  & Swarm                   \\ \hline

RA methodology                        & S/W                                                                                & S/W                                               & H/W                                                                              & H/b                                                                                    & H/b                                        & H/b                                                                                                                                                           & H/b                                         & S/W                                                  & S/W                                                  & H/b                                                  & S/W                                                  & S/W                   \\ \hline

Approach                        & Crypto                                                                                & ML                                               & Crypto                                                                              & Crypto                                                                                    & Crypto                                        & Crypto                                                                                                                                                           & Crytpo + ML                                         & FL                                                  & GNN                                                  & GNN                                                  & ML                                                  & GNN                   \\ \hline

Feature used                      & Flash                                                                                 & SRAM                                             & Flash                                                                        & Flash                                                                                     & Flash                                         & Flash                                                                                                                                                            & Flash                                               & Property                                          & Network                                         & Execution trace                                      & Property                                          & SRAM                  \\ \hline
Low availability & \cmark                                                                 & \xmark                            & \xmark                                                               & \xmark                                                                     & \xmark                         & \xmark                                                                                                                                            & \xmark                               & \xmark                               & \xmark                                & \xmark                                & \xmark                               & \xmark \\ \hline
High latency       & \cmark                                                                 & \cmark                            & \xmark                                                               & \xmark                                                                     & \xmark                         & \xmark                                                                                                                                            & \xmark                               & \xmark                               & \xmark                                & \xmark                                & \xmark                               & \xmark \\ \hline
Interrupts disabled           & \cmark                                                                 & \xmark                            & \xmark                                                               & \xmark                                                                     & \xmark                         & \xmark                                                                                                                                            & \xmark                               & \xmark                               & \xmark                                & \xmark                                & \xmark                               & \xmark \\ \hline
Specific hardware             & \xmark                                                                 & \xmark                            & \cmark                                                               & \cmark                                                                     & \xmark                         & \cmark                                                                                                                                            & \cmark                               & \xmark                               & \xmark                                & \cmark                                & \xmark                               & \xmark \\ \hline
Homogenous devices            & \xmark                                                                 & \xmark                            & \cmark                                                               & \cmark                                                                     & \xmark                         & \cmark                                                                                                                                            & \xmark                               & \xmark                               & \xmark                                & \xmark                                & \xmark                               & \xmark \\ \hline
Firmware needed               & \cmark                                                                 & \xmark                            & \cmark                                                               & \cmark                                                                     & \cmark                         & \cmark                                                                                                                                            & \cmark                               & \xmark                               & \xmark                                & \xmark                                & \xmark                               & \xmark \\ \hline
Limited attacks               & \cmark                                                                 & \cmark                            & \cmark                                                               & \cmark                                                                     & \cmark                         & \cmark                                                                                                                                            & \cmark                               & \cmark                               & \cmark                                & \xmark                                & \cmark                               & \xmark \\ \hline
\end{tabular}}
\end{table*}

\textit{\textbf{Contributions:}} To the best of our knowledge, this paper is the first to use SRAM as a feature for swarm attestation and presents the first datasets on SRAM-based swarm attestation. The contributions of this study are listed below.
\begin{enumerate}
    \item A novel GNN-based lightweight method for firmware attestation in swarms. The proposed method uses the SRAM's data section and simple GNN designs.
    \item A secure swarm attestation protocol and the corresponding security analysis. 
    \item Datasets on SRAM-based swarm attestation for (physically deployed) four four-node and six-node swarm configurations, encompassing various complex node-level and inter-node behaviors.
    \item Thorough experimental results of different GNN architectures using the collected datasets and additional simulated scenarios. 
    \item A comparison with related works on remote and swarm attestation.
\end{enumerate}

The codes for firmware, dataset collection, and attestation are available on GitHub\footnote{\url{https://github.com/VarunKohli18/swarm-net}}. The remainder of this paper is organized as follows: Section \ref{sec:related} discusses related works on attestation and presents a qualitative comparison of the proposed method with related works. Section \ref{sec:background} discusses background concepts on SRAM and GNN, followed by Section \ref{sec:network}, which presents the network and threat models considered in this study. Section \ref{sec:proposed} introduces, \emph{Swarm-Net}, the proposed swarm attestation method, followed by a detailed discussion on the dataset, devices, hyperparameters, and evaluation metrics in Section \ref{sec:setup}. Section \ref{sec:results} shows the experimental results, and Section \ref{sec:security} analyses the security of the proposed method. The paper is concluded in Section \ref{sec:conclusion}.

\section{Related Works}
\label{sec:related}

Several techniques for single-node RA and swarm attestation have been proposed to solve firmware-related problems in IoT. We structure our discussion into three parts: (a) on specific software (S/W), hardware (H/W), and hybrid (H/b) RA techniques for single prover scenarios, followed by (b) swarm attestation approaches, and lastly, (c) a brief overview of \emph{Swarm-Net}. Table \ref{tab:related} presents a qualitative comparison of \emph{Swarm-Net} with related works.

\subsection{RA methodologies}

Traditional software-based RA techniques in single-prover scenarios typically require an IoT microcontroller to evaluate a checksum on its program memory, which the verifier validates using a copy of the hash digest \cite{seshadri2004swatt,seshadri2006scuba,seshadri2008sake}. SWATT \cite{seshadri2004swatt}, for instance, performs over fifty thousand cumulative, uninterrupted hash iterations using pseudo-random memory traversal to attest a device. Attacks are detected by calculating time deviations in normal and attack scenarios. However, such an approach is complex, has high latency ($\sim$ minutes), low IoT device availability, uses precise time measurements, and requires a copy of the firmware with the verifier. SCUBA \cite{seshadri2006scuba} and SAKE \cite{seshadri2008sake} share similar problems. A few lightweight software attestation approaches use a combination of the microcontrollers' flash and RAM for firmware attestation using checksum verification but do not consider roving malware \cite{jakobsson2010retroactive,li2010sbap,chen2017secure}. Another recent study uses the SRAM directly as a feature for a Multi-layer Perceptron (MLP) classifier to attest among known malware and authentic nodes at high accuracy, however, with a limited number of classes \cite{aman2022machine}. Notable studies on hardware-based RA include \cite{krauss2007detecting,agrawal2015program,tan2011tpm}, which assume the availability of a TMP or software enclaves. This assumption does not apply to medium- and low-end IoT devices. To find a middle ground between hardware assumptions and cost-effectiveness, various hybrid approaches based on hardware/software codesign have been proposed, among which some notable studies include SMART \cite{eldefrawy2012smart}, TrustLite \cite{koeberl2014trustlite}, TyTan \cite{brasser2015tytan}, ATT-Auth \cite{aman2018att}, HAtt \cite{aman2020hatt} and \cite{hristozov2018practical}. \cite{eldefrawy2012smart,koeberl2014trustlite,brasser2015tytan,aman2018att} however, do not provide security against roving malware. Although the RAM-based approach presented in \cite{hristozov2018practical} has low latency, it requires a memory protection unit for privileged/unprivileged software execution. HAtt \cite{aman2020hatt} is another lightweight mechanism that offers a high availability of IoT nodes. It uses Physically Unclonable Functions (PUF) to protect secret information on IoT nodes from physical attacks, although requiring a copy of the microcontroller's firmware.

\subsection{Swarm attestation}

However, the feasibility and efficacy of RA techniques designed for single-node scenarios do not always relate to larger networks. This is because uniform assumptions may not apply to networks of heterogeneous devices with different communication and hardware capabilities. In addition, the attestation feature used is of great importance; for instance, leaving aside their latency, software-based approaches that use program memory checksums cannot capture relationships between nodes. These issues highlight the importance of efficient swarm attestation using suitable features; several (mostly hybrid and program memory checksum) methods have been proposed.

Notable cryptographic methods include \cite{khodari2019decentralized,asokan2015seda,carpent2017lightweight,visintin2019safe,kuang2019esdra,dushku2020sara,ibrahim2019healed}. The authors of \cite{khodari2019decentralized} propose a distributed, Merkle Hash Tree (MHT) attestation process for in-vehicle controller area networks in which. However, the nodes have high complexity and memory requirements. Furthermore, points of failure can lead to various unattested nodes. SEDA \cite{asokan2015seda} is a hybrid swarm RA technique built upon extends SMART \cite{eldefrawy2012smart} and TrustLite \cite{koeberl2014trustlite}. The verifier selects an arbitrary initiator node, and a spanning tree is created afterward. Each node calculates a hash over its memory, and the initiator collects the accumulated attestation report and shares it with the verifier for cross-checking. Despite its viability, a few limitations include its architectural impact, attestation timeout selection, and initiator node selection as highlighted by \cite{carpent2017lightweight}. The latter study proposes asynchronous and synchronous protocols called LISA$\alpha$ and LISA$s$, respectively, with minimal changes over SMART+ \cite{carpent2017lightweight}. SAFE$^d$ \cite{visintin2019safe} eliminates the need for a verifier by spreading security proofs among devices in a swarm, which they can use to validate each other. Another study, HEALED \cite{ibrahim2019healed}, can detect malicious firmware using MHT on the nodes' flash memory and disinfect compromised nodes. The five discussed studies, however, do not account for roving malware. 

Some Machine Learning (ML)-based approaches have also been proposed. WISE \cite{ammar2018wise} is an intelligent attestation method that addresses the heterogeneity of IoT devices and roving malware using a multi-clustering technique and variable attestation windows. It achieves an average detection rate of 62\%, with an evaluation latency of about 3.5 seconds. FeSA \cite{kuang2022fesa} is a distributed swarm attestation technique that uses Federated Learning (FL) and dynamic attestation periods based on the properties (device state, energy, and traffic) and security requirements for different devices, achieving an average of 87\% accuracy across various scenarios and low inference latency ($\sim$ one second). However, federated learning is typically associated with lower accuracies and the proposed method, A GNN-based distributed anomaly detection approach, is presented in \cite{protogerou2021graph}. The proposed graph MLP achieves an average 97.7\% accuracy for various network attacks using five or more seconds of network flow data. However, the model has limited applicability to malicious firmware, such as those with normal network behavior (which is a limitation of all network flow-based methods). The authors of a recent study \cite{chilese2024one} propose RAGE, a hybrid, Variational Graph Autoencoder (VGAE)-based control-flow attestation method that attests swarms using a single execution trace (or a set of executed instructions) collected from each device using a TEE. They achieve an average performance of 91\% and 98\% for return-oriented programming and data-oriented programming attacks with high availability. However, the TEE assumption does not apply to low-end devices, and the authors do not address the single point of failure in their GNN due to dropped attestation responses. The TEE assumption is also made in \cite{kibret2023property}, which uses machine learning for property-based attestation.

\subsection{Swarm-Net}

Based on the above discussion, most studies have limitations in one or more properties shown in Table \ref{tab:related}, aside from some latency and performance differences, which we highlight in Section \ref{sec:results}. There are also no available studies on SRAM-based attestation of swarms. To fill these gaps, we propose a novel GNN- and software-based attestation approach for reliable swarm attestation with high availability. \emph{Swarm-Net} makes minimal assumptions on IoT devices and verifies their state using a single SRAM trace during the attestation phase. We justify the use of SRAM over flash memory and network flow features for the following reasons: 

\begin{enumerate}
    \item SRAM is a readily available component on all medium- and low-end IoT microcontrollers.
    \item It is significantly smaller than flash memory by several orders of magnitude and is, therefore, faster to traverse.
    \item It captures runtime information, and its usage does not depend on the location of malware in the memory. It can indicate any firmware-related anomaly on the microcontroller, including roving malware.
    \item Since received messages are typically stored in the SRAM for usage, propagated anomalies are detectable.
    \item Lastly, there is no IP violation since a copy of the firmware is not required.
\end{enumerate}

Since the proposed method does not require the IoT device to compute uninterrupted checksums over the memory, it offers high availability on the IoT device. Furthermore, the interconnected nature of IoT networks is exploited using GNNs, enabling \emph{Swarm-Net} to (both) detect node-level and propagated anomalies.

\begin{table}[t]
\centering
\caption{List of relevant notations.}
\label{tab:notations}
\resizebox{0.7\columnwidth}{!}{
\begin{tabular}{|l|l|}
\hline
\textbf{Notation} & \textbf{Description}         \\ \hline
$MSE$             & Mean Square Error            \\ \hline
$CS$              & Cosine Similarity            \\ \hline
$GCN$             & Graph Convolution Network    \\ \hline
$GAT$             & Graph Attention Network      \\ \hline
$GT$              & Graph Transformer            \\ \hline
$ID_V$            & Verifier ID                  \\ \hline
$ID_G$            & Gateway ID                   \\ \hline
$ID_S$            & Swarm ID                     \\ \hline
$\text{N}_j$      & $j^{th}$ node in a swarm     \\ \hline
$T_j$             & Complete SRAM trace          \\ \hline
$T_j'$            & SRAM data section trace      \\ \hline
$b_i$             & Hex value at $i^{th}$ byte   \\ \hline
$d$               & Data section length          \\ \hline
$G_\theta$        & GNN with parameters $\theta$ \\ \hline
$\mathcal{N}_v$   & Number of one-hop neighbors  \\ \hline
$e_{ij}$          & Attention coefficients       \\ \hline
$\alpha_{ij}$     & Importance values            \\ \hline
$W$               & Weight matrix                \\ \hline
$SR$              & Swarm Response               \\ \hline
$X$               & Trainset                     \\ \hline
$x$               & Graph sample                 \\ \hline
$\hat{x}$         & Reconstructed graph          \\ \hline
$\tilde{x}$       & Graph sample with uniform noise\\ \hline
$\epsilon$        & Uniform noise                \\ \hline
$k$               & Noise factor                 \\ \hline
$DT$              & Decision Threshold           \\ \hline
$sf$              & Scaling factor               \\ \hline
$T_{def}$         & Default trace set            \\ \hline
$L$               & Maximum padding length       \\ \hline
$f$               & Predicted decision flags     \\ \hline
$C$               & Nonce                        \\ \hline
$\text{AN}_j$     & Anomaly at $\text{N}_j$      \\ \hline
$\text{S}_i$      & Simulated scenario $i$       \\ \hline
AR                & Attestation Rate             \\ \hline
DR                & Detection Rate               \\ \hline
$T_o$             & Overhead/latency             \\ \hline
\end{tabular}}
\end{table}

\begin{figure}[t]
    \centering
    \includegraphics[width=0.8\columnwidth]{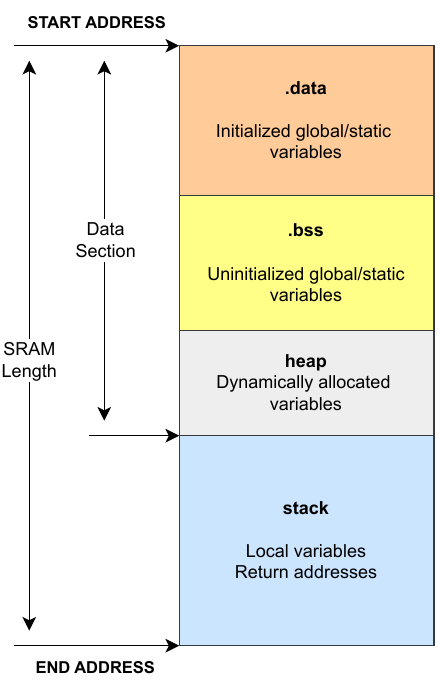}
    \caption{General organization of a microcontroller's SRAM.}
    \label{fig:sram}
\end{figure}

\section{Background}
\label{sec:background}
This section provides a basic background on SRAM and GNN; Table \ref{tab:notations} compiles a list of relevant notations for the readers' reference.

\subsection{Static Random Access Memory}
SRAM is a type of volatile memory that stores an embedded device's runtime data and is refreshed on every powerup. It is several orders of magnitude smaller than flash (thus, more practical), and its contents are refreshed on every power-up. The SRAM stores useful runtime information about the firmware installed on a microcontroller and is divided into four segments \cite{koshy2005remote} as shown in Figure \ref{fig:sram}. The .data and .bss sections store initialized and uninitialized global or static variables, respectively. The heap stores dynamically allocated variables, and the stack stores local variables and return addresses. The first three sections together may be called the data section, which has a varying size depending on the memory usage of the program loaded on the microcontroller. As we show in this paper, the data section is a useful feature for verifying firmware integrity. The stack, on the other hand, has been used for device authentication by a recent study \cite{kohli2024intelligent}. 

The data section has the same behavior for the same firmware on different devices (consistency) \cite{kohli2024intelligent} and different behavior for different firmware versions on the same device (distinguishability) \cite{aman2022machine}. Making changes to the firmware leads to changes in the SRAM during runtime, and thus, malicious code is likely to have abnormal SRAM dumps. Furthermore, in a swarm setting, data generated at one node and sent to another creates a relationship between the otherwise independent SRAM contents of the two nodes as variables with the same values are created on both devices. This can help detect the downstream effects of firmware and network anomalies. We leverage these properties to perform software-based swarm attestation of microcontroller firmware.

\textbf{SRAM notations:}
A complete SRAM trace ($t_j$) generated by an IoT device $\text{N}_j$ is represented as a sequence of $l$ bytes: 

\begin{equation}
    T_j = [b_0,b_1,...,b_{l-1}]_j\;,
\end{equation}

where $b_i$ is the hexadecimal value stored at the $i^{th}$ byte of the SRAM, and $l$ is the length of the node's SRAM. However, each node in the swarm must send only the SRAM data section: 

\begin{equation}
    T_j' = [b_0,b_1,...,b_{d-1}]_j\;,
\end{equation}

where $d$ is the size of the data section corresponding to the firmware loaded on $\text{N}_j$. Note that d << l, typically.

\subsection{Graph Neural Networks}\label{sec:GNN}
\textbf{Graph notations:}
We denote a graph $\mathcal{G}=(\mathcal{V}, \mathcal{E})$ where $\mathcal{V}$ is a set consisting of nodes $\lbrace v_1, v_2,...,v_n \rbrace$ and $\mathcal{E}\subseteq \mathcal{V}\times\mathcal{V} $ represents the set of edges in $\mathcal{G}$. For a given node $v\in\mathcal{V}$, $\mathcal{N}_v$ denotes the set of one-hop neighbors of $v$, i.e., all nodes $w\in\mathcal{N}_v$ are connected to $v$ via one edge. We also denote $\mathcal{A}\in\lbrace 0,1 \rbrace^{n\times n}$ as the adjacency matrix and $\mathcal{X}\in\mathbb{R}^{n\times d}$ as the node features of the graph. In this study, we assume a given graph is simple and unweighted.

\textbf{Network architectures:} GNNs typically follow the message-passing propagation framework, which first aggregates the features of a node $v$ with neighboring nodes $u\in\mathcal{N}_v$ using an aggregation function $\phi$, then updates them using a learnable function $f_\theta$ (usually an MLP) \cite{wu2020comprehensive}. This process is repeated several times to obtain node representations that can be used for downstream tasks. 

Edges between IoT devices can have complex relations that need to be modeled using learnable aggregation methods for a more robust understanding of how the network shares and uses information. In this paper, we consider three GNN architectures from Pytorch Geometric (PyG) \footnote{\url{https://pytorch-geometric.readthedocs.io/en/stable/}} for attestation: Graph Convolution Networks (\emph{GCN}) \cite{kipf2016semi}, Graph Attention Networks (\emph{GAT}) \cite{velivckovic2017graph} and Graph Transformers (\emph{GT}) \cite{shi2020masked}.

\emph{\underline{Graph Convolution Network}:}
Given node features $H^{(l)}$ at layer $l$ (sufficiently padded to account for bias), feature propagation in the case of the PyG implementation \emph{GCNConv} is computed as follows: 
\begin{equation}
H^{(l+1)}=\sigma(\tilde{\mathcal{D}}^{-1/2}\tilde{\mathcal{A}}\tilde{\mathcal{D}}^{-1/2}H^{(l)}W^{(l)})\;,
\end{equation}
where $\tilde{\mathcal{A}}=\mathcal{A}+\mathcal{I}$ (self-loops added), $\tilde{\mathcal{D}_{ii}}=\sum\tilde{\mathcal{A}_i}$, $W^{(l)}$ is the weight matrix for layer $l$, and $\sigma(\cdot)$ is an activation function.

Note that, GCN assigns equal importance to all neighbors \cite{velivckovic2017graph}, which may not be the best choice for the heterogeneous inter-node relationships present in IoT swarms.

\emph{\underline{Graph Attention Network}:}
\emph{GAT} computes attention coefficients $e_{ij}$ as per the following:
\begin{equation}
    e_{ij} = a(W h_i, W h_j)\;.
\end{equation}

The attention mechanism $a(\cdot)$ uses a weight matrix $W$, and the current node features $h_i$ and $h_j$ to compute the \emph{importance} of node $j$ to node $i$. These important values $\alpha_{ij}$ are normalized across the neighborhood of node $i$. The PyG implementation \emph{GATConv} computes $\alpha_{ij}$ using the following:

\begin{equation}
    \alpha_{ij} = \text{softmax}_i(e_{ij}) = \frac{\text{exp}(e_{ij})}{\sum_{k\in \mathcal{N}_i}\text{exp}(e_{ik})} \;.
\end{equation}

\emph{\underline{Graph Transformer}:}
We use the Unified Message Passing model (UniMP), which extends vanilla multi-headed attention \cite{vaswani2017attention} to graphs. The PyG implementation \emph{TransformerConv} uses $4$ weight matrices $W_1^{(l)},W_2^{(l)},W_3^{(l)},W_4^{(l)}$ per attention head at layer $l$ to compute a weighted aggregation of features of neighboring nodes. Additionally, a fifth and a sixth weight matrix ($W_5^{(l)}$ and $W_6^{(l)}$) can be introduced if gated residual connections between layers or edge features need to be considered. Attention coefficient coefficients $\alpha_{ij}$'s are computed (softmax over all nodes $j\in\mathcal{N}_i$) using

\begin{equation}
    \alpha_{ij} = \text{softmax}_i\left(\frac{q^T k_j}{\sqrt{C}}\right)\;,
\end{equation}
where $q = W_1^{(l)} h^{(l)}_i,\;k_j = W_2^{(l)} h^{(l)}_j,v_j = W_3^{(l)} h^{(l)}_j$, and feature propagation follows
\begin{equation}
    h^{(l+1)}_{i} = W_4^{(l)} h^{(l)}_i + m_j \;,
\end{equation}
where $m_j = \sum\limits_{j\in\mathcal{N}_i} \alpha_{ij} v_j$. Here $C$ is the number of output channels for a given \emph{TransformerConv} layer. For more than one head, unique $q$ and $k_j$ values are computed for each head, and the final $m_j$ is obtained by either concatenating or averaging all $m_j^{(c)}$'s.

In contrast to GAT, which uses a single shared linear transformation on every node parameterized by a single weight matrix $W$, feature propagation in GT enables learning more complex inter-node relationships due to a higher degree of parameterization using $W_{1-4}^{(l)}$ \cite{vaswani2017attention}.

\begin{figure}[t]
    \centering
    \includegraphics[width=0.9\columnwidth]{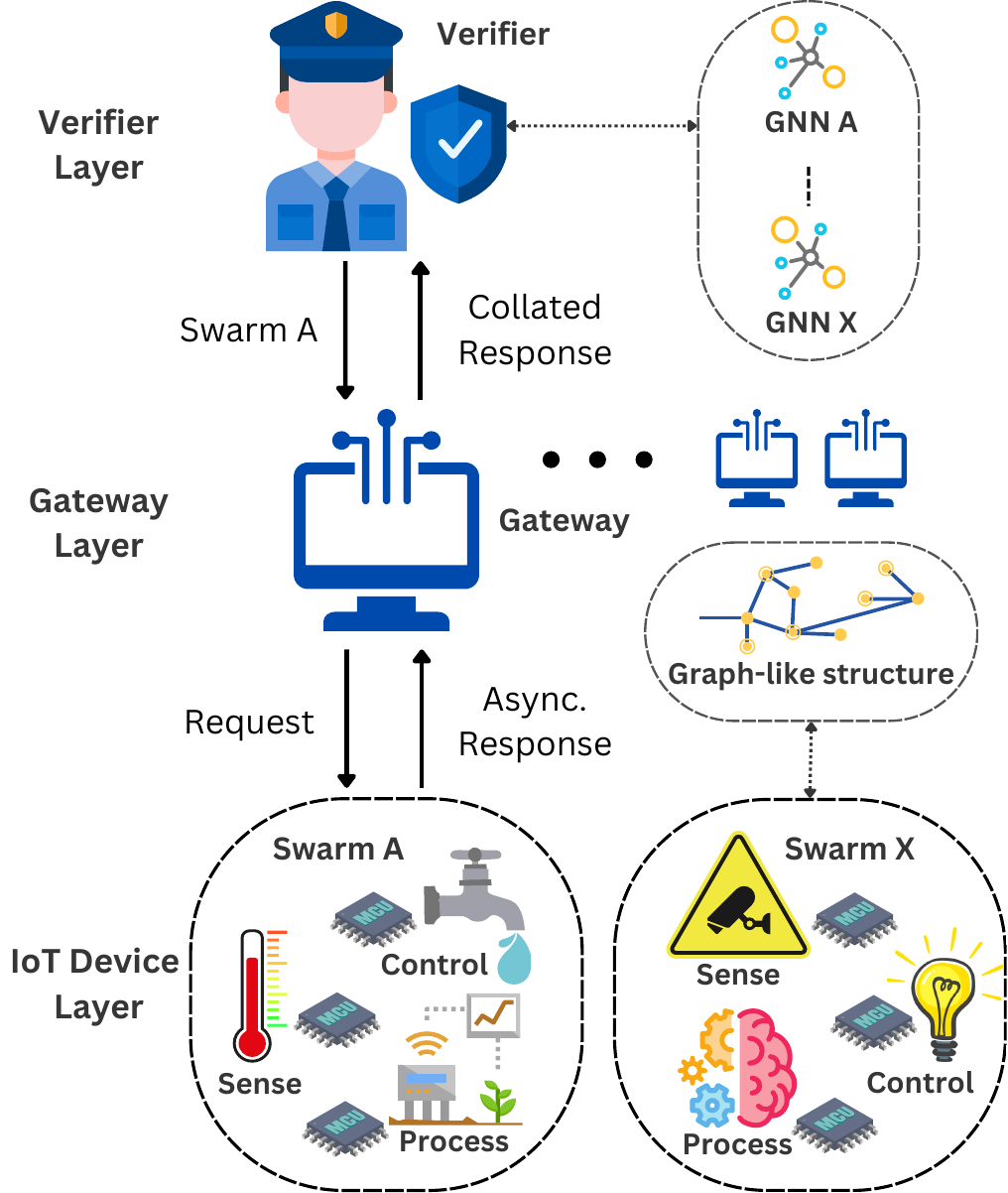}
    \caption{Layered network model.}
    \label{fig:network}
\end{figure}

\section{Network and Threat Model}
This section presents the network model and assumptions about an adversary.
\label{sec:network}
\subsection{Network Model}
Figure \ref{fig:network} shows the network model considered in this study. The network consists of three overall layers: 
\begin{enumerate}
    \item [1.] \textbf{Verifier Layer:} This layer comprises the verifier, a trusted remote device with computational power to run the proposed attestation method. The verifier sends attestation requests for a specific swarm to the gateway devices under its jurisdiction. It analyzes the collated responses from the gateways using the corresponding GNN. We assume that the communication between the verifier and the gateways is secure and that the verifier has prior knowledge of the functionality of the swarm and its constituent devices \cite{kuang2022fesa}.
    \item [2.] \textbf{Gateway Layer:} This layer consists of one or more trusted gateway devices that receive attestation requests from the verifier for the swarms under their jurisdiction, send an attestation request to the swarm specified by the verifier, collect the asynchronous responses received from a swarm's devices, and send the collated response set to the verifier.
    \item [3.] \textbf{IoT Device Layer:} This layer comprises swarms of vulnerable IoT devices, each comprising a microcontroller as the hardware platform. The tasks of the IoT devices may be divided into three overall categories (or their combinations): sense, process, and control \cite{carpent2017lightweight}. In the context of attestation, IoT devices are the provers that respond to attestation requests from the gateway with their SRAM dumps. Swarms have an inherent, graph-like structure due to the necessary communication between sense-process-control tasks. Furthermore, we assume that the direction and type of information flow in the swarm are known to the verifier, given its knowledge of the nodes' normal functionality at the time of deployment. 
\end{enumerate}

\subsection{Threat Model}
The following assumptions are made about the adversary.
\begin{enumerate}
    \item [1.] The adversary may send malicious firmware updates to one or more remote devices. In doing so, the adversary may partially or completely change an IoT device's functionality.
    \item [2.] The adversary can eavesdrop on the communication within the swarm and launch man-in-the-middle attacks. 
    \item [3.] The adversary can drop messages shared between (a) two IoT devices to create out-of-sync network states and (b) an IoT device and the gateway to cause desynchronization of the attestation round.
    \item [4.] The adversary can launch a denial of service on the swarm by impersonating the gateway and frequent attestation requests. It may achieve this by replaying previously authorized attestation requests sent to the swarm.
    \item [5.] The adversary can replay old attestation responses to impersonate a prover.
    \item [6.] The adversary cannot access the secret key or the attestation round nonce shared between the IoT microcontrollers and the trusted gateway \cite{samiullah2023group}.
\end{enumerate}

\emph{Swarm-Net} attempts to solve these threats and is discussed in Section \ref{sec:proposed}.

\begin{algorithm}[t]
\caption{Training algorithm run by $ID_V$.}
\label{algo:train}

$t \gets time()$

$request(ID_G,ID_S,samples=m)$

\While{$time()-t \geq timeout$}{
    \eIf{$receive(sender=ID_G)$}{
        $SR \gets receive(sender=ID_G)$
        
        $break()$
    }{
        $return(-1)$ \tcp{No reply from $ID_G$}
    }
}

$G_\theta \gets initializeGNN()$

$X \gets pad(int(SR)/255,L)$

$\tilde{x} = x + k\cdot\epsilon, \forall x \in X$

$e = 0$

\While{$e \leq epochs$}{
    $G_\theta \gets backprop(\theta,x,\tilde{x}), \forall x \in X$
    
    $e = e+1$
}

$DT_j \gets  sf \cdot \min_{i\in[0...m-1]} CS(X_j,\hat{X}_j), \forall N_j \in ID_S$

$T_{def,j} = \frac{\sum X_j}{m}, \forall N_j \in ID_S$

$DT \gets [DT_0,...,DT_n-1]$

$T_{def} \gets [T_{def,0},...,T_{def,n-1}]$

$saveParams(\{G_\theta,DT,T_{def},L\}_S)$

\end{algorithm}

\begin{algorithm}[t]
\caption{Attestation algorithm run by $ID_V$.}
\label{algo:attest}

$t \gets time()$

$request(ID_G,ID_S,samples=1)$

\While{$time()-t \geq timeout$}{
    \eIf{$receive(sender=ID_G)$}{
        $SR \gets receive(sender=ID_G)$
        
        $break()$
    }{
        $return(-1)$ \tcp{No reply from $ID_G$}
    }
}

$\{G_\theta,DT,T_{def},L\}_S \gets loadParams(ID_S)$

$x \gets SR == 0 \;?\; T_{def}:SR$

$x \gets padding(int(x)/255,L)$

$\hat{x} \gets G_\theta(x)$

$f \gets CS(x,\hat{x}) > DT \;?\; 0:1$

$return(f)$
\end{algorithm}

\begin{figure*}[t]
    \centering
    \includegraphics[width=0.75\textwidth]{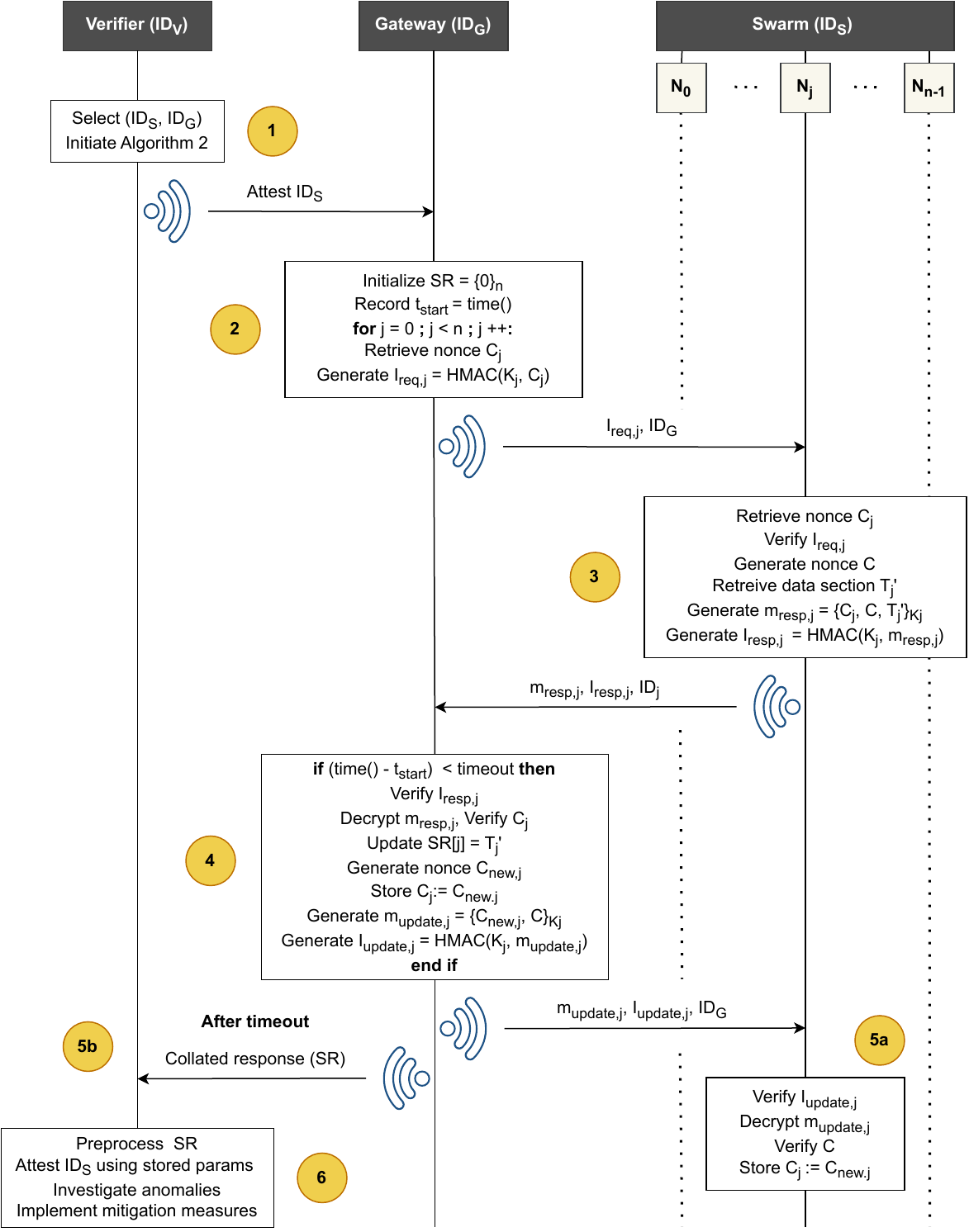}
    \caption{The \emph{Swarm-Net} attestation protocol.}
    \label{fig:protocol}
\end{figure*}

\section{Proposed Technique: Swarm-Net}
\label{sec:proposed}
This section discusses the training phase, the attestation algorithm, and the overall attestation protocol.

\subsection{Training Phase}\label{sec:training}

After deploying a swarm $ID_S$, the verifier $ID_V$ initiates the training phase depicted in Algorithm \ref{algo:train}. It prompts the gateway $ID_G$ to collect a sufficient number ($m \gtrsim 500$ based on experiments) of collated samples from each IoT device $\text{N}_j$ under normal operation. $ID_G$ collects the necessary $m$-sample swarm response $SR$ and sends it to $ID_V$. We assume this collection happens under monitored conditions with no malicious activity to ensure a clean training dataset. On the verifier's end, each $T'$ in $SR$ is converted to its integer equivalent, scaled by a factor of 255 (to bring it in a [0,1] range), and padded to an appropriately selected maximum length $L$ for uniformity during evaluation. A training set of graphs $X\in\mathbb{R}^{m \times n \times L} \subset \mathbb{X}$ is thus created, where $\mathbb{X}$ is the complete distribution of graphs associated with normal firmware in the swarm. The learnable parameters $\theta$ are then optimized using back-propagation to approximately reconstruct the training distribution $\mathbb{X}$. To do so all $x \in X$ are perturbed with scaled Uniform noise $\epsilon\sim\mathcal{U}(0,1)$ such that $\tilde{x} = x + k\cdot\epsilon$ where $\epsilon,x\in\mathbb{R}^{n\times L}$ and the GNN is trained to denoise $\tilde{x}$ such that $MSE(x,\hat{x})$ is minimized. Here, $k$ is the noise factor, $\hat{x} = G_\theta(\tilde{x})$ and $MSE(x,\hat{x})$ is computed as follows:

\begin{equation}
    MSE(x,\hat{x}) = \frac{||x-\hat{x}||^2_2}{nL} \;,
\end{equation}
where $||\cdot||_2$ is the $L_2$ norm. Adding $\epsilon$ is necessary to force $G_\theta$ to learn the output distribution $\mathbb{X}$ regardless of the input, thereby improving the model's performance on anomalous samples. Upon completion of model training, $ID_V$ computes the detection threshold $DT_j$ of each device $\text{N}_j$. While most anomaly detection studies use MSE-based thresholds, we observe that Cosine Similarity (CS) enables a more intuitive and generalized threshold selection approach. Each $DT_j$ is computed as follows:  

\begin{equation}
    DT_j = sf \cdot \min_{i\in[0...m-1]} CS(X_j,\hat{X_j}) \;,
\end{equation}
where $X_j$ is the preprocessed data of $\text{N}_j$, and $sf$ is the scaling factor of the thresholds, which decides how close the decision boundary should be to the normal CS scores. We use $sf = 0.999$ in our experiments. Further, $\hat{x} = G_\theta(x)$ and CS between two vectors $\Bar{u}$ and $\Bar{w}$ is evaluated as:

\begin{equation}
    CS(\Bar{u},\Bar{w}) = \frac{\Bar{u}\cdot\Bar{w}}{|\Bar{u}||\Bar{w}|} \;.
\end{equation}

$ID_V$ compiles each $DT_j$ into a single set $DT \in \mathbb{R}^{n\times 1}$. Further, for the event of dropped attestation responses from nodes in $ID_S$, $ID_V$ computes a set $T_{def} = [T_{def,0}...T_{def,n-1}] \in \mathbb{R}^{n\times L}$ of default traces where $T_{def,j} \in \mathbb{R}^{1 \times L}$ is computed as:

\begin{equation}
    T_{def,j} = \frac{\sum X_j}{m} \;.
\end{equation}

Here, $X_j$ is the training data of $\text{N}_j$. These average traces are in the event of dropped responses during attestation. This way, $ID_V$ can verify the firmware of other respondents while avoiding a single point of failure. Finally, $ID_V$ stores the parameter set $param = \{G_\theta, DT, T_{def}, L\}_S$ in its memory. In addition, $ID_G$ creates a set of resynchronization nonces $L_R$, which it distributes to the swarm. $L_R$ is used by $ID_G$ to resync the attestation process in case of desynchronization attacks by an adversary.

\subsection{Attestation Phase}

The attestation phase includes two sub-parts: an algorithm and the overall protocol.

\subsubsection{\textbf{Attestation algorithm}}

Algorithm \ref{algo:attest} presents the attestation procedure run by $ID_V$ to attest $ID_S$. The verifier requests $ID_G$ for one graph SRAM sample from $ID_S$ over a secure channel and waits for $ID_G$'s collated swarm response $SR$. Upon receiving $SR$, it retrieves $param = \{G_\theta, DT, T_{def}, L\}_S$ and preprocesses $SR$. It replaces the missing responses (if any) using the respective default values of the nodes in the set $T_{def}$. It then scales the traces by a factor of 255 and pads them with zeros to a common length $L$. The model $G_\theta$ then generates a reconstructed graph $\hat{x}=G_\theta(x)$. $ID_G$ evaluates the similarity scores of each node as $CS(x,\hat{x})$ and compares them with the pre-determined $DT$. All $\text{N}_j$ with $CS_j$ above their respective $DT_j$ are flagged as authorized (0) and, otherwise, anomalous (1). The algorithm returns the decision array $f$, which the verifier uses to investigate anomalies (if any) and initiate necessary threat mitigation measures. Algorithm \ref{algo:attest} is encapsulated within the attestation protocol, which is explained next.

\begin{figure*}[t]
    \centering
    \begin{subfigure}{\columnwidth}
        \centering
        \includegraphics[width=0.7\columnwidth]{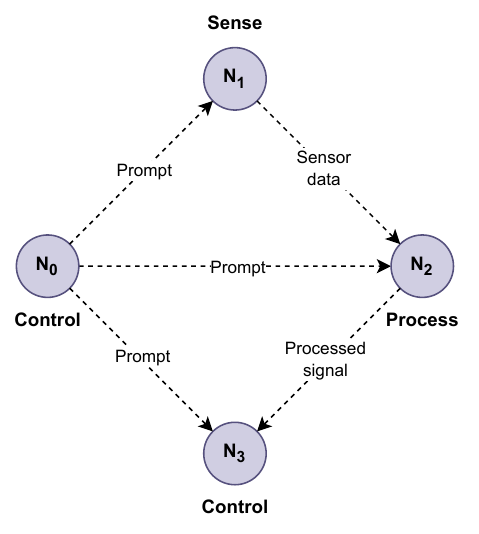}
        \caption{Four-node network configuration of Swarm-1.}
        \label{fig:4node}
    \end{subfigure}%
    \begin{subfigure}{\columnwidth}
        \centering
        \includegraphics[width=0.7\columnwidth]{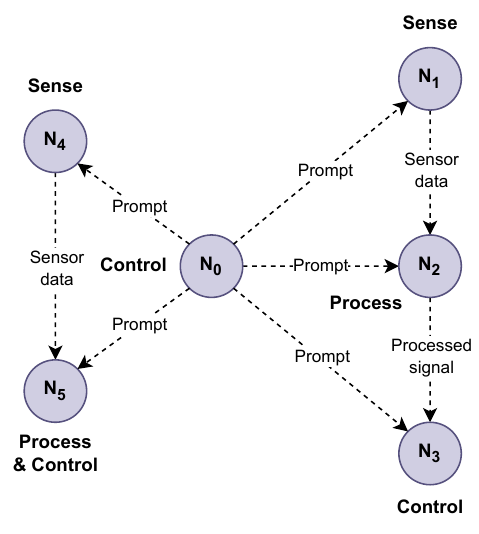}
        \caption{Six-node network configuration of Swarm-2.}
        \label{fig:6node}
    \end{subfigure}
    \caption{Swarm configurations used to collect the two datasets.}
    \label{fig:swarms}
\end{figure*}

\begin{figure}[t]
    \centering
    \includegraphics[width=0.8\columnwidth]{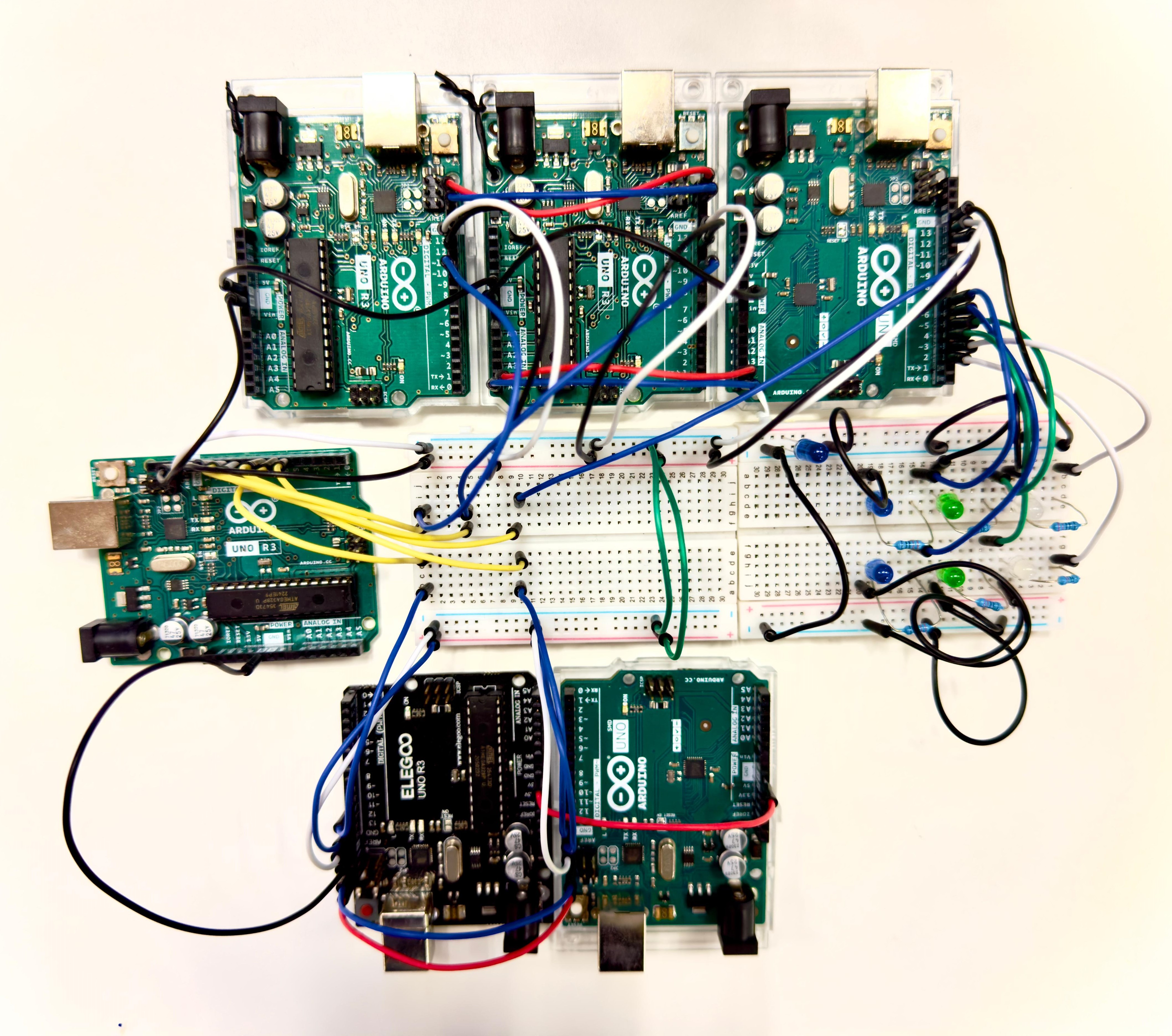}
    \caption{Physical swarm setup for dataset collection.}
    \label{fig:network}
\end{figure}

\subsubsection{\textbf{Attestation protocol}}
The \emph{Swarm-Net} attestation protocol is presented in Figure \ref{fig:protocol}. The devices use symmetric-key encryption and Hash-based Message Authentication Codes (HMAC, SHA-256 \cite{gilbert2003security}). Furthermore, we assume that the gateway securely shares the secret keys with the IoT devices during the initial setup and as needed in subsequent firmware updates \cite{samiullah2023group}. One run of the protocol consists of the six following steps:

\begin{enumerate}
    \item [1.] The verifier $ID_V$ picks a swarm $ID_S$ and its corresponding trusted gateway $ID_G$. It then initiates Algorithm \ref{algo:attest}, sends an attestation request to $ID_G$ for $ID_S$ via a secured channel and waits for a response.
    \item [2.] $ID_G$ initializes a n-size Swarm Response ($SR$) and begins the timeout timer. It retrieves a stored nonce $C_j$ and the shared secret $K_{j}$ and generates an HMAC $I_{req,j}$ for each node $\text{N}_j$ in the swarm, which it then sends to the nodes.
    \item [3.] Each $\text{N}_j$ verifies the validity of HMAC $I_{req,j}$ using the stored $C_j$ and shared secret key $K_{j}$. It generates a nonce $C$ and retrieves its SRAM data section $T_j'$. It sends an encrypted message $m_{resp,j}$ and the HMAC $I_{resp,j}$ to $ID_G$.
    \item [4.] $ID_G$ collects the responses from all nodes and accepts them if they arrive within the timeout. It verifies the accepted HMAC $I_{resp,j}$ using $K_{j}$. It decrypts $m_{resp,j}$ and verifies $C_j$. It then generates a new nonce $C_{new,j}$ for each valid respondent $\text{N}_j$, stores it into $C_j$, and creates an encrypted message $m_{update,j}$ and HMAC $I_{update,j}$ which it sends to $\text{N}_j$.
    \item [5a.] Each respondent $\text{N}_j$ verifies the received $I_{update,j}$ using $K_{j}$. It decrypts $m_{update,j}$, verifies $C$, and stores $C_{new,j}$ into $C_j$ for the next attestation round.
    \item [5b.] Upon completion of the timeout, $ID_G$ sends the collated $SR$ to $ID_V$ over a secure channel.
    \item [6.] $ID_V$ continues Algorithm \ref{algo:attest}. It preprocesses $SR$ to $x$ and evaluates the reconstructed traces $\hat{x}$ using the stored $G_\theta$. It then evaluates a trust decision $f$ for the swarm using $DT$. In case of anomalies, the administrator is prompted to investigate and take mitigation measures.
\end{enumerate}

\begin{table*}[t]
\centering
\caption{Normal and anomalous firmware in Swarm-1.}
\label{tab:swarm1}
\resizebox{\textwidth}{!}{
\begin{tabular}{|c|l|l|c|l|c|}
\hline
\textbf{Node}          & \textbf{Type}            & \multicolumn{1}{c|}{\textbf{Normal firmware functions}} & \textbf{$d$}          & \multicolumn{1}{c|}{\textbf{Anomalous firmware functions}} & \textbf{$d$}          \\ \hline
\multirow{2}{*}{$\text{N}_0$} & \multirow{2}{*}{Control} & \multirow{2}{*}{Broadcasts one byte to all nodes}       & \multirow{2}{*}{191} & Generate three random integers                             & \multirow{2}{*}{195} \\ 
                       &                          &                                                         &                      & Broadcasts one byte to all nodes                           &                      \\ \hline
\multirow{2}{*}{$\text{N}_1$} & \multirow{2}{*}{Sense}   & Generate six floating point numbers in unique ranges    & \multirow{2}{*}{450} & Generate data in an extended range                         & \multirow{2}{*}{438} \\
                       &                          & Send data to $\text{N}_2$                                      &                      & Send data to $\text{N}_2$                                         &                      \\ \hline
\multirow{2}{*}{$\text{N}_2$} & \multirow{2}{*}{Process} & Process received data into a six-byte signal            & \multirow{2}{*}{516} & Generate a random six-byte signal                          & \multirow{2}{*}{414} \\
                       &                          & Send processed signal to $\text{N}_3$                          &                      & Send control signal to $\text{N}_3$                               &                      \\ \hline
$\text{N}_3$                  & Control                  & Control six LEDs using the processed signal             & 406                  & Control six LEDs at random                                 & 386                  \\ \hline
\end{tabular}}
\end{table*}

\begin{table}[!t]
\centering
\caption{Swarm-1 dataset scenarios.}
\label{tab:dat1}
\resizebox{0.95\columnwidth}{!}{
\begin{tabular}{|c|c|c|c|c|}
\hline
\textbf{S.No.} & \textbf{Scenario} & \textbf{\begin{tabular}[c]{@{}c@{}}Primary\\ Anomaly\end{tabular}} & \textbf{\begin{tabular}[c]{@{}c@{}}Secondary\\ Anomaly\end{tabular}} & \textbf{\begin{tabular}[c]{@{}c@{}}Swarm Label\\ $N_0-...-N_3$\end{tabular}} \\ \hline
1-2 & $\text{D}_{1-2}$         & -                                                                & -                                                                      & 0-0-0-0 \\ \hline
3-4 & $\text{P}_{1-2}$         & -                                                                & -                                                                      & 0-0-0-0 \\ \hline
5   & $\text{AN}_0$            & $\text{N}_0$                                                            & -                                                                      & 1-0-0-0 \\ \hline
6   & $\text{AN}_1$            & $\text{N}_1$                                                            & $\text{N}_2$, $\text{N}_3$                                                           & 0-1-1-1 \\ \hline
7   & $\text{AN}_2$            & $\text{N}_2$                                                            & $\text{N}_3$                                                                  & 0-0-1-1 \\ \hline
8   & $\text{AN}_3$            & $\text{N}_3$                                                            & -                                                                      & 0-0-0-1 \\ \hline
9   & $\text{AN}_{12}$         & $\text{N}_1$, $\text{N}_2$                                                     & $\text{N}_3$                                                                  & 0-1-1-1 \\ \hline
10  & $\text{AN}_{23}$         & $\text{N}_2$, $\text{N}_3$                                                     & -                                                                      & 0-0-1-1 \\ \hline
11  & $\text{AN}_{13}$         & $\text{N}_1$, $\text{N}_3$                                                     & $\text{N}_2$                                                                  & 0-1-1-1 \\ \hline
12  & $\text{AN}_{123}$        & $N_{1-3}$                                                        & $\text{N}_2$, $\text{N}_3$                                                           & 0-1-1-1 \\ \hline
13  & $\text{AN}_{0123}$       & $N_{0-3}$                                                        & $\text{N}_2$, $\text{N}_3$                                                           & 1-1-1-1 \\ \hline
\end{tabular}}
\end{table}

\begin{table*}[t]
\centering
\caption{Normal and anomalous firmware in Swarm-2.}
\label{tab:swarm2}
\resizebox{\textwidth}{!}{
\begin{tabular}{|c|l|l|c|l|c|}
\hline
\textbf{Node}          & \textbf{Type}                                                                 & \multicolumn{1}{c|}{\textbf{Normal firmware functions}} & \textbf{$d$}          & \multicolumn{1}{c|}{\textbf{Anomalous firmware functions}} & \textbf{$d$}          \\ \hline
\multirow{2}{*}{$\text{N}_0$} & \multirow{2}{*}{Control}                                                      & \multirow{2}{*}{Broadcasts one byte to all nodes}       & \multirow{2}{*}{195} & Generate two random integers                               & \multirow{2}{*}{199} \\
                       &                                                                               &                                                         &                      & Broadcasts one byte to all nodes                           &                      \\ \hline
\multirow{2}{*}{$\text{N}_1$} & \multirow{2}{*}{Sense}                                                        & Generate four floating point numbers in unique ranges   & \multirow{2}{*}{438} & Generate data in an extended range                         & \multirow{2}{*}{430} \\
                       &                                                                               & Send data to $\text{N}_2$                                      &                      & Send data to $\text{N}_2$                                         &                      \\ \hline
\multirow{2}{*}{$\text{N}_2$} & \multirow{2}{*}{Process}                                                      & Process received data into a four-byte signal           & \multirow{2}{*}{490} & Generate a random four-byte signal                         & \multirow{2}{*}{414} \\
                       &                                                                               & Send processed signal to $\text{N}_3$                          &                      & Send control signal to $\text{N}_3$                               &                      \\ \hline
$\text{N}_3$                  & Control                                                                       & Control four LEDs using the processed signal            & 394                  & Control four LEDs at random                                & 386                  \\ \hline
\multirow{2}{*}{$\text{N}_4$} & \multirow{2}{*}{Sense}                                                        & Generate three floating point numbers in unique ranges  & \multirow{2}{*}{430} & Generate data normally                                     & \multirow{2}{*}{372} \\
                       &                                                                               & Send data to $\text{N}_5$                                      &                      & Does not send data to $\text{N}_5$                                &                      \\ \hline
\multirow{2}{*}{$\text{N}_5$} & \multirow{2}{*}{\begin{tabular}[l]{@{}l@{}}Process \&\\ Control\end{tabular}} & Process received data into a three-byte signal          & \multirow{2}{*}{446} & Process received data normally                             & \multirow{2}{*}{452} \\
                       &                                                                               & Control three LEDs using the processed signal           &                      & Control three unauthorized LEDs                      &                      \\ \hline
\end{tabular}}
\end{table*}

\begin{table}[t]
\centering
\caption{Swarm-2 dataset scenarios.}
\label{tab:dat2}
\resizebox{0.9\columnwidth}{!}{
\begin{tabular}{|c|c|c|c|c|}
\hline
\textbf{S.No.} & \textbf{Scenario} & \textbf{\begin{tabular}[c]{@{}c@{}}Primary\\ Anomaly\end{tabular}} & \textbf{\begin{tabular}[c]{@{}c@{}}Secondary\\ Anomaly\end{tabular}} & \textbf{\begin{tabular}[c]{@{}c@{}}Swarm Label\\ $N_0-...-N_5$\end{tabular}} \\ \hline
1-4 & $\text{D}_{1-4}$         & -                                                                  & -                                                                    & 0-0-0-0-0-0 \\ \hline
5   & $\text{AN}_0$            & $\text{N}_0$                                                              & -                                                                    & 1-0-0-0-0-0 \\ \hline
6   & $\text{AN}_1$            & $\text{N}_1$                                                              & $\text{N}_2$, $\text{N}_3$                                                         & 0-1-1-1-0-0 \\ \hline
7   & $\text{AN}_2$            & $\text{N}_2$                                                              & $\text{N}_3$                                                                & 0-0-1-1-0-0 \\ \hline
8   & $\text{AN}_3$            & $\text{N}_3$                                                              & -                                                                    & 0-0-0-1-0-0 \\ \hline
9   & $\text{AN}_4$            & $\text{N}_4$                                                              & $\text{N}_5$                                                                & 0-0-0-0-1-1 \\ \hline
10  & $\text{AN}_5$            & $\text{N}_5$                                                              & -                                                                    & 0-0-0-0-0-1 \\ \hline
\end{tabular}}
\end{table}

\section{Experimental Setup}
This section covers the experimental setup used in this paper, including the devices, programming platforms and libraries, sample IoT swarms, datasets, hyperparameters, and evaluation metrics.
\label{sec:setup}

\subsection{Hardware and Software}

Figure \ref{fig:swarms} and \ref{fig:network} show sample swarms and the physical setup used for dataset collection, respectively. All IoT nodes are variants of the Arduino UNO Rev3 and Elegoo UNO Rev3 devices, each with an 8-bit ATmega328P microcontroller and a 2KB SRAM. The gateway device is a Dell Latitude laptop with an Intel i7 processor and 16GB DRAM. Further, we use two verifier devices: GPU experiments were run on a compute server with an AMD EPYC 7742 processor and an Nvidia A100-SXM4 GPU, and CPU experiments were run locally on a laptop with an Intel(R) Core(TM) Ultra 7 155H processor.

The firmware was programmed in Arduino IDE 2.3.2, while the data collection and attestation codes were written in Python 3.8. The main Python libraries used are Numpy 1.26.4, Pandas 2.2.2, Pyserial 3.5, and Pytorch Geometric 2.5.3.

\subsection{IoT Swarms, Datasets, and Simulations}
We present our SRAM dataset on IEEE Dataport \cite{gmee-vj41-24}, collected from the swarms shown in Figure \ref{fig:swarms}. In addition, we simulate three network attack scenarios to test the holistic robustness of \emph{Swarm-Net}.

\subsubsection{\textbf{Swarm-1}}
Swarm-1 is the four-node network shown in Figure \ref{fig:4node}. 

\textit{Behaviors:} This swarm captures normal behavior, physical twins of development networks, malicious firmware, faulty/tampered data, and abnormal peripheral control. 

\textit{Firmware:} Details about the normal and anomalous variants of each node are provided in Table \ref{tab:swarm1}. $\text{N}_0$ sends one byte to all other nodes in the swarm. In its anomalous variant, $\text{N}_0$ generates three random integers (six bytes) in addition to its main function. The anomalous variant doesn't affect other nodes, and its purpose is to detect node-level anomalies with the same network behavior as the normal variant, as well as to check if the GNN misclassifies other nodes due to message passing. $\text{N}_1$ is a sense-type node that generates six floating point numbers (twenty-four bytes) in unique ranges and sends them to $\text{N}_2$. In its anomalous variant, $\text{N}_1$ generates these numbers in an extended range (including the original), creating partially/completely faulty data collection scenarios at downstream nodes ($\text{N}_2$ and $\text{N}_3$). $\text{N}_2$ is a process-type node that receives six floating point numbers (twenty-four bytes) and generates a control signal (six bytes) based on pre-defined logic. It then sends this processed signal to $\text{N}_3$. In its anomalous variant, $\text{N}_2$ discards the data received from $\text{N}_1$ and instead generates a random control signal for $\text{N}_3$, which affects $\text{N}_3$'s operation. Lastly, $\text{N}_3$ is a control-type node that uses the processed signals from $\text{N}_2$ to control six LEDs. In its anomalous variant, $\text{N}_3$ discards the received signal and instead controls the output LEDs at random.

\textit{Dataset:} The Swarm-1 dataset includes thirteen scenarios (four normal and nine anomalous), as shown in Table \ref{tab:dat1}. $\text{D}_1$ and $\text{D}_2$ are development sets used for training and testing and are collected from two independent initializations of the original swarm network. In contrast, $\text{P}_1$ and $\text{P}_2$ are collected from two physical twin networks made using a different set of devices. Among the anomalous scenarios, $\text{AN}_j$ refers to an anomaly introduced at $\text{N}_j$. Each scenario has 400 synchronized, sequential SRAM traces for each $\text{N}_j$, where each trace is the sequence of integer equivalents of the hex content of the SRAM. 

\subsubsection{\textbf{Swarm-2}}
Swarm-2 is the six-node network shown in Figure \ref{fig:6node} that offers a greater challenge to the verifier given its more complex design, a lower number of communicated bytes, a lower degree of change in firmware for anomalous cases, and a larger number of threat types.

\textit{Behaviors:} This swarm captures normal behavior, malicious firmware, faulty/tampered data, dropped messages, out-of-sync states, abnormal peripheral control, and additional malicious peripherals.

\textit{Firmware:} The details of the normal and anomalous firmware of each node are given in Table \ref{tab:swarm2}. $\text{N}_0$ has the same purpose as explained in Swarm-1. It sends one byte to $N_1-N_5$. In its anomalous variant, $\text{N}_0$ generates two random integers (four bytes). The first branch in this swarm consists of three nodes: $\text{N}_1$ is a process-type node that generates four floating point numbers (sixteen bytes) in unique ranges. Its anomalous variant generates random data in extended ranges, causing downstream effects at $\text{N}_2$ and $\text{N}_3$. $\text{N}_2$ is a process-type node that uses the data received from $\text{N}_1$, and generates a four-byte processed signal for $\text{N}_3$. In its anomalous variant, $\text{N}_2$ sends a random processed signal to $\text{N}_3$, affecting its operation. $\text{N}_3$ is a control-type node that uses the received processed signal to control four LEDs. In its anomalous variant, $\text{N}_3$ generates a random control signal. The second branch of Swarm-2 comprises two nodes: $\text{N}_4$ is a process-type node that generates three floating point numbers (twelve bytes) and sends them to $\text{N}_5$. In its anomalous variant, the function used to send data is commented out (tampering with control dependencies). This anomaly creates an out-of-sync state between $\text{N}_4$ and $\text{N}_5$ and should be detectable at both nodes during attestation. Lastly, $\text{N}_5$ is a process- and control-type node that processes the data received from $\text{N}_4$ and controls three LEDs. In its anomalous variant, $\text{N}_5$ uses the received data to control three LEDs connected to different microcontroller pins.

\textit{Dataset:} The Swarm-2 dataset consists of ten scenarios (four normal and six anomalous), as shown in Table \ref{tab:dat2}. $\text{D}_1-\text{D}_4$ are development sets for training and testing collected from four separate initializations of the development network. $\text{AN}_0-\text{AN}_5$ are six anomalous scenarios where $\text{AN}_j$ corresponds to an anomaly at $\text{N}_j$. Each scenario comprises 900 synchronized, sequential SRAM traces for all $\text{N}_j$.

\begin{table}[t]
\centering
\caption{Behavior types encapsulated in the swarm datasets and simulated scenarios.}
\label{tab:categories}
\resizebox{0.9\columnwidth}{!}{
\begin{tabular}{|c|l|l|}
\hline
\textbf{S.No.} & \textbf{Behavior type}      & \textbf{Scenarios}                                                                                        \\ \hline
1              & Normal behavior             & $\text{D}_{1-4}$, $\text{P}_{1-2}$                                                                                          \\ \hline
2              & Physical twins               & $\text{P}_{1-2}$                                                                                                           \\ \hline
3              & Malicious firmware          & $\forall\text{AN}_i$                                                                                                          \\ \hline
4              & Propagated anomalies        & \begin{tabular}[c]{@{}l@{}}$\text{AN}_1$, $\text{AN}_2$, $\text{AN}_{12}$, \\ $\text{AN}_{13}$, $\text{AN}_4$\end{tabular} \\ \hline
5              & Faulty/tampered data        & $\text{AN}_1$, $\text{AN}_2$                                                                                               \\ \hline
6              & Abnormal peripheral control & $\text{AN}_3$, $\text{AN}_5$                                                                                                              \\ \hline
7              & Tampered functions          & $\text{AN}_4$                                                                                                              \\ \hline
8              & Out-of-sync states          & $\text{AN}_4$, $\text{S}_3$                                                                                                              \\ \hline
9              & Added peripherals           & $\text{AN}_5$                                                                                                              \\ \hline
10             & Dropped responses           & $\text{S}_1$                                                                                                              \\ \hline
11             & SRAM perturbation           & $\text{S}_2$                                                                                                              \\ \hline
12             & Trace replay                & $\text{S}_3$                                                                                                              \\ \hline
\end{tabular}}
\end{table}

\begin{table*}[t]
\centering
\caption{Attestation results on Swarm-1 averaged over twenty training and testing phases.}
\label{tab:results1}
\resizebox{0.8\textwidth}{!}{
\begin{tabular}{|c|cccc|cccc|cccc|} \hline
                                    & \multicolumn{4}{c|}{\textbf{GCN}}                                                         & \multicolumn{4}{c|}{\textbf{GAT}}                                                        & \multicolumn{4}{c|}{\textbf{GT}}                                                                              \\ \cline{2-13}
\multirow{-2}{*}{\textbf{Scenario}} & \textbf{$\text{N}_0$} & \textbf{$\text{N}_1$} & \textbf{$\text{N}_2$}               & \textbf{$\text{N}_3$}               & \textbf{$\text{N}_0$} & \textbf{$\text{N}_1$} & \textbf{$\text{N}_2$}              & \textbf{$\text{N}_3$}               & \textbf{$\text{N}_0$}               & \textbf{$\text{N}_1$}               & \textbf{$\text{N}_2$}               & \textbf{$\text{N}_3$}               \\ \hline
$\text{D}_1$                        & 100            & 100            & 100                          & 100                          & 100            & 100            & 100                         & 100                          & {\color[HTML]{036400} 100}   & {\color[HTML]{036400} 100}   & {\color[HTML]{036400} 100}   & {\color[HTML]{036400} 100}   \\
$\text{D}_2$                        & 100            & 100            & 100                          & 100                          & 100            & 100            & 100                         & 100                          & {\color[HTML]{036400} 100}   & {\color[HTML]{036400} 100}   & {\color[HTML]{036400} 100}   & {\color[HTML]{036400} 100}   \\
$\text{P}_1$                        & 99.67          & 99.67          & 100                          & 100                          & 99.67          & 99.67          & 100                         & 100                          & {\color[HTML]{036400} 100}   & {\color[HTML]{036400} 100}   & {\color[HTML]{036400} 100}   & {\color[HTML]{036400} 100}   \\
$\text{P}_2$                        & 100            & 99.33          & 99                           & 100                          & 100            & 99.33          & 99                          & 100                          & {\color[HTML]{036400} 100}   & 98.63 & {\color[HTML]{036400} 99.33} & {\color[HTML]{036400} 100}   \\
$\text{AN}_0$                       & 100            & 100            & 97.95                        & 100                          & 100            & 100            & 100                         & 100                          & {\color[HTML]{036400} 100}   & 99.98 & {\color[HTML]{036400} 100}   & {\color[HTML]{036400} 100}   \\
$\text{AN}_1$                       & 100            & 100            & {\color[HTML]{FE0000} 13.98} & 95.38                        & 100            & 100            & {\color[HTML]{FE0000} 3.68} & {\color[HTML]{FE0000} 81.87} & {\color[HTML]{036400} 100}   & {\color[HTML]{036400} 100}   & {\color[HTML]{036400} 97.48} & {\color[HTML]{036400} 99.02} \\
$\text{AN}_2$                       & 99             & 100            & 100                          & {\color[HTML]{FE0000} 20.72} & 99.33          & 100            & 100                         & {\color[HTML]{FE0000} 0.83}  & {\color[HTML]{036400} 99.63} & {\color[HTML]{036400} 100}   & {\color[HTML]{036400} 100}   & {\color[HTML]{036400} 100}   \\
$\text{AN}_3$                       & 100            & 100            & 99.67                        & 100                          & 100            & 100            & 99.67                       & 100                          & {\color[HTML]{036400} 100}   & {\color[HTML]{036400} 100}   & {\color[HTML]{036400} 100}   & {\color[HTML]{036400} 100}   \\
$\text{AN}_{12}$                    & 100            & 100            & 100                          & {\color[HTML]{FE0000} 21.48} & 100            & 100            & 100                         & {\color[HTML]{FE0000} 0.98}  & {\color[HTML]{036400} 100}   & {\color[HTML]{036400} 100}   & {\color[HTML]{036400} 100}   & {\color[HTML]{036400} 100}   \\
$\text{AN}_{23}$                    & 99.33          & 100            & 100                          & 100                          & 99.67          & 100            & 100                         & 100                          & {\color[HTML]{036400} 100}   & 99.97 & {\color[HTML]{036400} 100}   & {\color[HTML]{036400} 100}   \\
$\text{AN}_{13}$                    & 99.78          & 100            & {\color[HTML]{FE0000} 13.45} & 100                          & 100            & 100            & {\color[HTML]{FE0000} 2.33} & 100                          & {\color[HTML]{036400} 100}   & {\color[HTML]{036400} 100}   & {\color[HTML]{036400} 96.95} & {\color[HTML]{036400} 100}   \\
$\text{AN}_{123}$                   & 100            & 100            & 100                          & 100                          & 100            & 100            & 100                         & 100                          & {\color[HTML]{036400} 100}   & {\color[HTML]{036400} 100}   & {\color[HTML]{036400} 100}   & {\color[HTML]{036400} 100}   \\
$\text{AN}_{0123}$                  & 100            & 100            & 100                          & 100                          & 100            & 100            & 100                         & 100                          & {\color[HTML]{036400} 100}   & {\color[HTML]{036400} 100}   & {\color[HTML]{036400} 100}   & {\color[HTML]{036400} 100} \\ \hline
\end{tabular}}
\end{table*}

\begin{table*}[t]
\centering
\caption{Attestation results on Swarm-2 averaged over twenty training and testing phases.}
\label{tab:results2}
\resizebox{\textwidth}{!}{
\begin{tabular}{|c|cccccc|cccccc|cccccc|} \hline
                                    & \multicolumn{6}{c|}{\textbf{GCN}}                                                                                                     & \multicolumn{6}{c|}{\textbf{GAT}}                                                                                                     & \multicolumn{6}{c|}{\textbf{GT}}                                                                                                                                          \\ \cline{2-19}
\multirow{-2}{*}{\textbf{Scenario}} & \textbf{$\text{N}_0$}               & \textbf{$\text{N}_1$} & \textbf{$\text{N}_2$}           & \textbf{$\text{N}_3$}               & \textbf{$\text{N}_4$} & \textbf{$\text{N}_5$} & \textbf{$\text{N}_0$}               & \textbf{$\text{N}_1$} & \textbf{$\text{N}_2$}           & \textbf{$\text{N}_3$}               & \textbf{$\text{N}_4$} & \textbf{$\text{N}_5$} & \textbf{$\text{N}_0$}               & \textbf{$\text{N}_1$}               & \textbf{$\text{N}_2$}               & \textbf{$\text{N}_3$}               & \textbf{$\text{N}_4$}               & \textbf{$\text{N}_5$}             \\ \hline
$\text{D}_1$             & 100                          & 100            & 100                      & 100                          & 100            & 100            & 100                          & 100            & 100                      & 100                          & 100            & 100            & {\color[HTML]{036400} 100}   & {\color[HTML]{036400} 100}   & {\color[HTML]{036400} 100}   & {\color[HTML]{036400} 100}   & {\color[HTML]{036400} 100}   & {\color[HTML]{036400} 100} \\
$\text{D}_2$               & 100                          & 100            & 100                      & 100                          & 100            & 100            & 100                          & 100            & 100                      & 100                          & 100            & 100            & {\color[HTML]{036400} 100}   & {\color[HTML]{036400} 100}   & {\color[HTML]{036400} 100}   & {\color[HTML]{036400} 100}   & {\color[HTML]{036400} 100}   & {\color[HTML]{036400} 100} \\
$\text{D}_3$                        & 100                          & 99.88          & 100                      & 100                          & 99.88          & 100            & 100                          & 99.88          & 100                      & 100                          & 99.88          & 100            & {\color[HTML]{036400} 100}   & {\color[HTML]{036400} 99.88} & {\color[HTML]{036400} 100}   & {\color[HTML]{036400} 100}   & {\color[HTML]{036400} 100}   & {\color[HTML]{036400} 100} \\
$\text{D}_4$                        & 100                          & 99.75          & 100                      & 100                          & 100            & 100            & 100                          & 99.75          & 100                      & 100                          & 100            & 100            & {\color[HTML]{036400} 100}   & {\color[HTML]{036400} 99.88} & {\color[HTML]{036400} 100}   & {\color[HTML]{036400} 100}   & {\color[HTML]{036400} 100}   & {\color[HTML]{036400} 100} \\
$\text{AN}_0$                       & {\color[HTML]{FE0000} 71.03} & 99.99          & 100                      & 100                          & 100            & 100            & {\color[HTML]{FE0000} 12.42} & 99.75          & 99.94                    & 99.88                        & 100            & 100            & {\color[HTML]{036400} 100}   & {\color[HTML]{036400} 99.99} & {\color[HTML]{036400} 100}   & {\color[HTML]{036400} 100}   & 99.88 & {\color[HTML]{036400} 100} \\
$\text{AN}_1$                       & 100                          & 100            & {\color[HTML]{FE0000} 0} & {\color[HTML]{FE0000} 1.12}  & 100            & 100            & 100                          & 100            & {\color[HTML]{FE0000} 0} & {\color[HTML]{FE0000} 2.16}  & 100            & 100            & 99.99 & {\color[HTML]{036400} 100}   & {\color[HTML]{036400} 100}   & {\color[HTML]{036400} 98.08} & {\color[HTML]{036400} 100}   & {\color[HTML]{036400} 100} \\
$\text{AN}_2$                       & 100                          & 100            & 100                      & {\color[HTML]{FE0000} 36.02} & 100            & 100            & 99.97                        & 100            & 100                      & {\color[HTML]{FE0000} 41.69} & 100            & 100            & 99.98 & 99.88 & {\color[HTML]{036400} 100}   & {\color[HTML]{036400} 100}   & {\color[HTML]{036400} 100}   & {\color[HTML]{036400} 100} \\
$\text{AN}_3$                       & 100                          & 100            & 100                      & 100                          & 100            & 100            & 100                          & 99.88          & 100                      & 100                          & 100            & 100            & {\color[HTML]{036400} 100}   & {\color[HTML]{036400} 100}   & {\color[HTML]{036400} 100}   & {\color[HTML]{036400} 100}   & {\color[HTML]{036400} 100}   & {\color[HTML]{036400} 100} \\
$\text{AN}_4$                       & 99.75                        & 100            & 100                      & 100                          & 100            & 100            & 100                          & 100            & 100                      & 100                          & 100            & 100            & {\color[HTML]{036400} 100}   & {\color[HTML]{036400} 100}   & {\color[HTML]{036400} 100}   & {\color[HTML]{036400} 100}   & {\color[HTML]{036400} 100}   & {\color[HTML]{036400} 100} \\
$\text{AN}_5$                       & 100                          & 100            & 100                      & 100                          & 100            & 100            & 100                          & 100            & 100                      & 100                          & 100            & 100            & {\color[HTML]{036400} 100}   & {\color[HTML]{036400} 100}   & 99.99 & {\color[HTML]{036400} 100}   & {\color[HTML]{036400} 100}   & {\color[HTML]{036400} 100} \\ \hline
\end{tabular}}
\end{table*}

\subsubsection{\textbf{Simulated Scenarios}}
In addition to the scenarios in the Swarm-1 and Swarm-2 datasets, we use the development sets to simulate the following scenarios:

\begin{enumerate}
    \item [1.] \underline{Dropped responses ($\text{S}_1$)}: Since adversaries may drop attestation responses from the swarm to the gateway, it is important to evaluate the performance of Algorithm \ref{algo:attest} in such a case. To simulate dropped responses, attestation responses from randomly selected $\text{N}_j$ are dropped and replaced with the corresponding default trace $t_{def,j}$ from $T_{def}$ during attestation.
    \item [2.] \underline{SRAM perturbation ($\text{S}_2$)}: An adversary may attempt to forge an SRAM trace or attack the integrity of communication by changing the messages sent from the swarm to the gateway. To simulate such a scenario, we add varying bytes of randomness to the development sets. 
    \item [3.] \underline{Trace replay/out-of-sync states ($\text{S}_3$)}: An adversary may drop messages between nodes or attempt to replay attestation responses collected from nodes. In doing so, they may create network states that are not in sync with the current state of the Sense-Process-Control tasks in the network. To create such a threat scenario, we randomly shuffle the development sets along the temporal axis.
\end{enumerate}

\subsection{Models and Hyperparameters}
We experiment with three GNN architectures - GCN, GAT, and GT using the respective \emph{GCNConv}, \emph{GATConv}, and \emph{TransformerConv} layers from PyG. All models consist of two graph layers compressing input data to a latent dimension of $32$, followed by a single Linear layer for reconstruction. In all experiments, we use the Adam \cite{kingma2014adam} optimizer with a learning rate of $0.01$ and a weight decay ($L2$ regularization on model weights) of $0.0005$. Uniform noise sampled from $\mathcal{U}(0,1)$ are added to the inputs with a coefficient of $0.4$.

\subsection{Evaluation Metrics}  
The proposed threshold-based anomaly detection approach is essentially a binary classification problem, wherein similarity scores of SRAM data sections lying above their respective thresholds are flagged as safe (0) and otherwise anomalous (1). We define True Positive ($TP$), True Negative ($TN$), False Positive ($FP$), and False Negative ($FN$) as the correct 1, correct 0, incorrect 1, and 0 invalid flags after thresholding, respectively.

Subsequently, we use three metrics: Accuracy ($A$), Detection Rate ($DR$), and Attestation Rate ($AR$) defined as follows:

\begin{equation}
\label{eq:acc}
    A = \frac{TP+TN}{TP+TN+FP+FN}
\end{equation}

\begin{equation}
    DR = \frac{TP}{TP+FN}
\end{equation}

\begin{equation}
    AR = \frac{TN}{TN+FP}
\end{equation}

Here, $DR$ and $AR$ are analogous to the true positive and negative rates, respectively.

\section{Results}
\label{sec:results}
This section presents the experimental results observed in various test scenarios (categorized in Table \ref{tab:categories}), an analysis of performance, and a quantitative comparison with related works.

\subsection{Swarm-1}
Table \ref{tab:results1} shows the detection results on the Swarm-1 dataset. In addition, Figure \ref{fig:results1} compiles these results into four categories: overall performance, authentic firmware, anomalous firmware, and propagated anomalies, and compares the three proposed model architectures.

As the table and figure show, GT has an overall accuracy of 99.83\% with a 99.92\% AR on authentic firmware, 100\% DR on anomalous firmware, and 98.7\% DR on propagated anomalies. While GAT and GCN have comparable performance to GT in normal and anomalous firmware samples, their performance is significantly lower on propagated anomalies observed in $\text{AN}_1$, $\text{AN}_2$, $\text{AN}_{12}$ and $\text{AN}_{13}$. In such cases, GAT and GCN attain an average DR of 18\% and 33\%, lowering their overall accuracy to 92.04\% and 93.43\%, respectively.

\textit{\textbf{Intuition:}} Based on our discussion on the capabilities of different aggregation methods in Section \ref{sec:GNN}, GT has an advantage over GAT and GCN regarding its ability to learn more complex network edge behaviors. This is supported by the results observed for Swarm-1 and Swarm-2 (in the next subsection). This reasoning can also extend to other GNN and ML architectures.

\begin{figure*}[t]
    \centering
    \begin{subfigure}{\columnwidth}
        \centering
        \includegraphics[trim={0 80 0 80},width=\columnwidth]{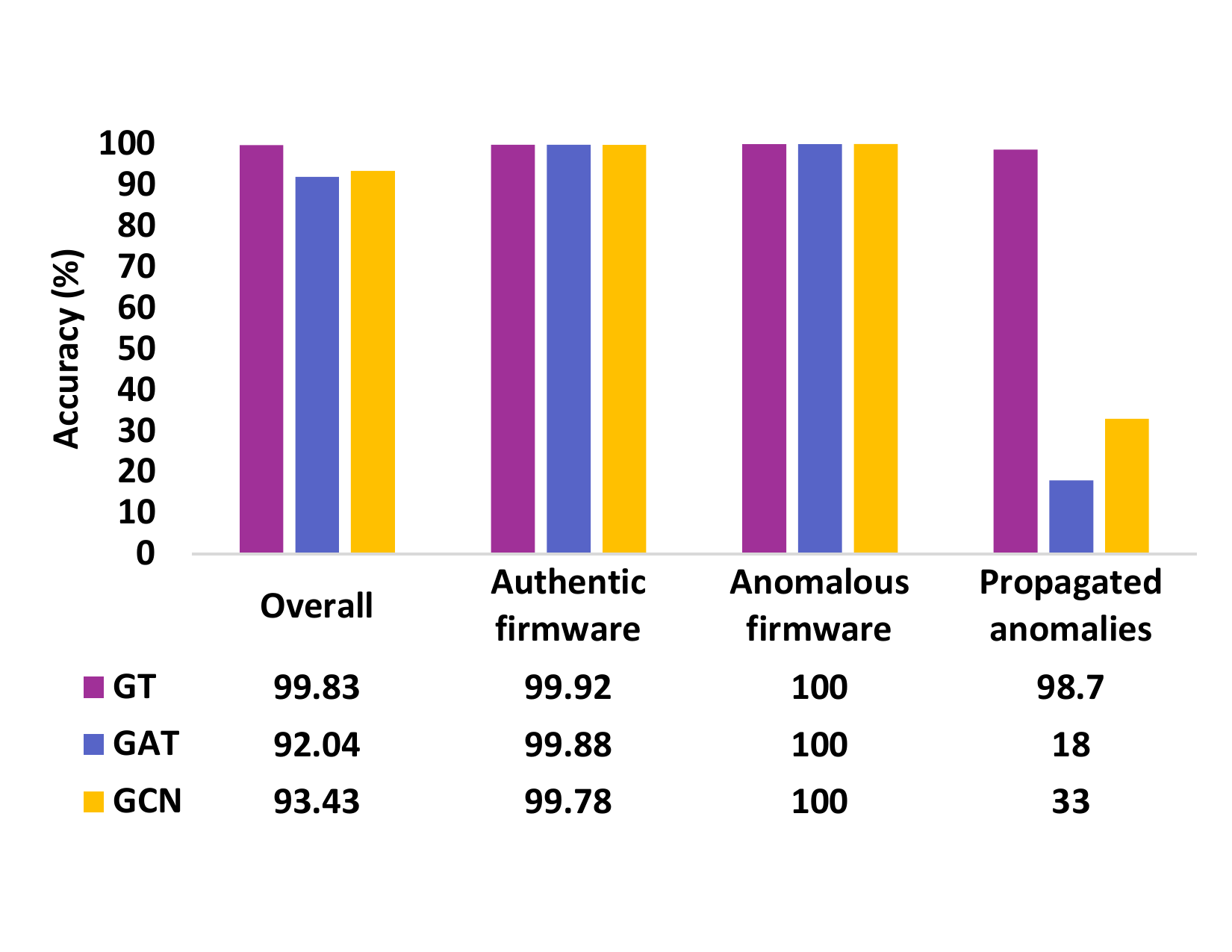}
        \caption{Swarm-1}
        \label{fig:results1}
    \end{subfigure}%
    \begin{subfigure}{\columnwidth}
        \centering
        \includegraphics[trim={0 80 0 80},width=\columnwidth]{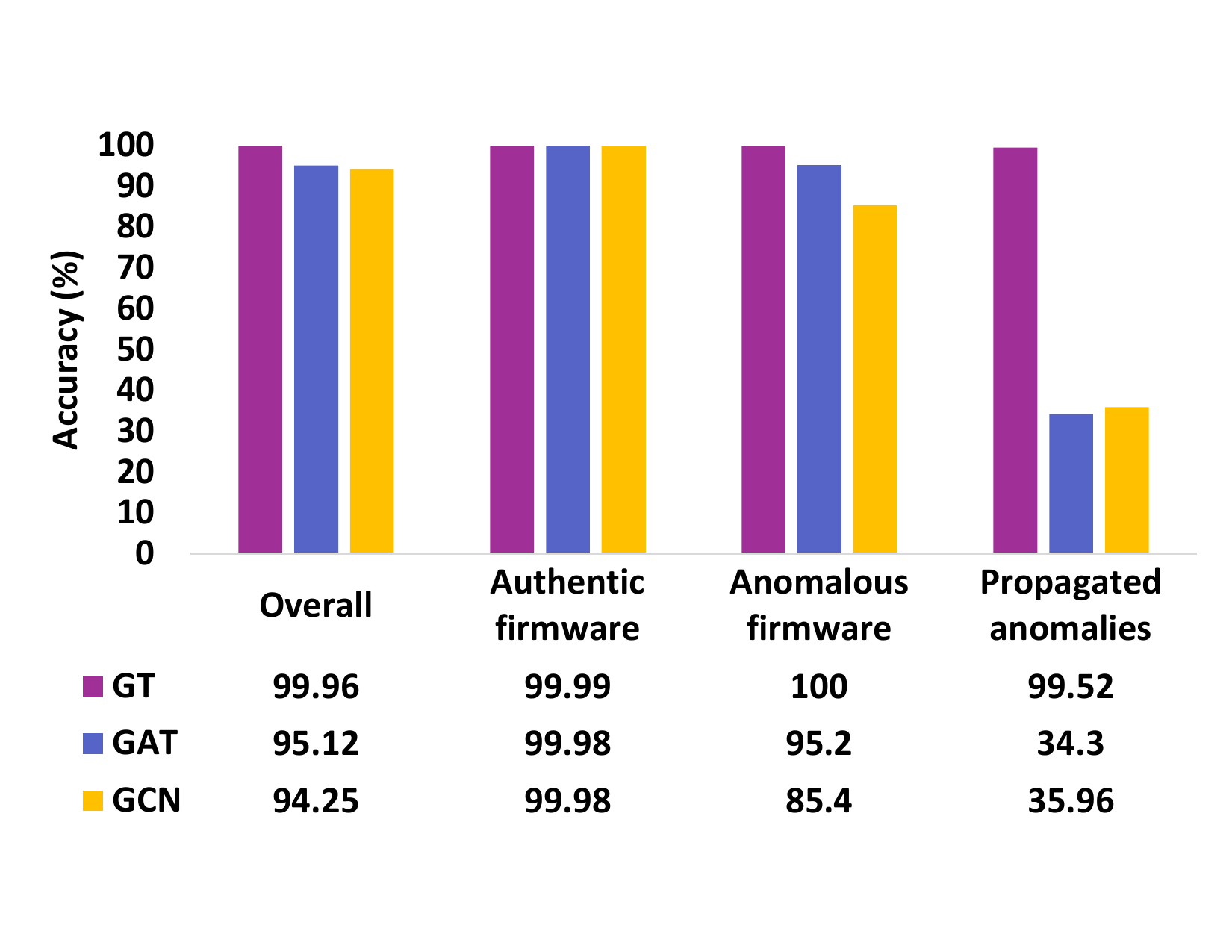}
        \caption{Swarm-2}
        \label{fig:results2}
    \end{subfigure}
    \caption{Performance of the proposed graph models across different detection tasks.}
    \label{fig:results}
\end{figure*}

\begin{figure}[t]
    \centering
    \includegraphics[trim={50 80 50 80},width=0.85\columnwidth]{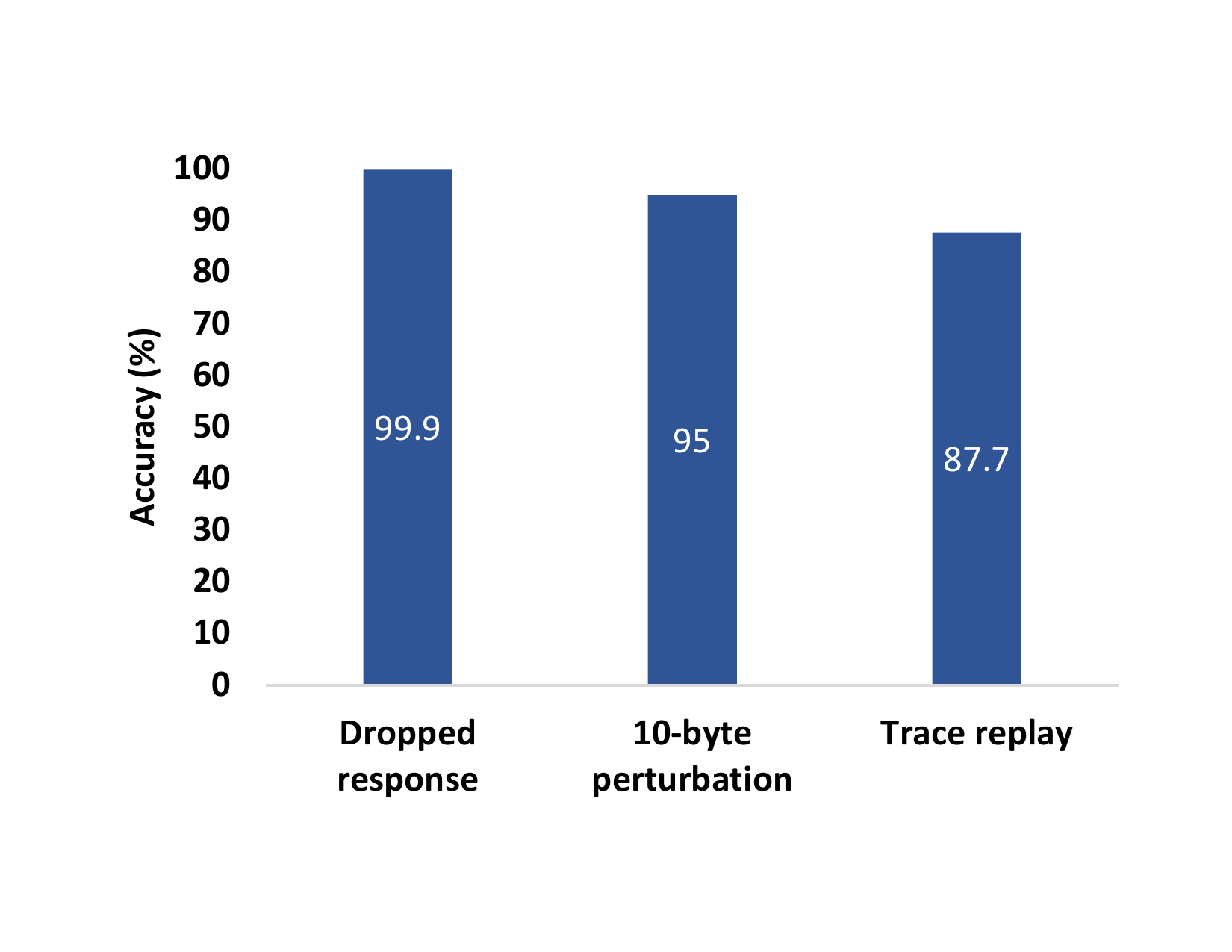}
    \caption{Performance of \emph{Swarm-Net(GT)} on simulated attacks.}
    \label{fig:simulated}
\end{figure}

\subsection{Swarm-2}
Similarly, Table \ref{tab:results2} and Figure \ref{fig:results2} show the results of the three GNN architectures on Swarm-2.

GT achieves an overall accuracy of 99.96\% with a 99.99\% AR on authentic firmware, 100\% DR on anomalous firmware, and 99.52\% DR on propagated anomalies. GAT and GCN have comparable AR on authentic firmware (99.98\% each); however, their performance is lower on anomalous firmware (specifically on $\text{AN}_0$). As mentioned in Section \ref{sec:setup}, the anomalous variant of $\text{N}_0$ in Swarm-2 generates four random bytes compared to six in Swarm-1. This makes the detection of the anomaly more difficult. However, GT has a 100\% DR in this case. Furthermore, given the reduced number of bytes shared between $\text{N}_1$, $\text{N}_2$ and $\text{N}_3$ compared to Swarm-1, detecting downstream anomalies is a greater challenge in this dataset. GAT and GCN have an average 12.4\% and 14.6\% DR on propagated anomalies in $\text{AN}_1$ and $\text{AN}_2$. However, they have 100\% DR in the case of $\text{AN}_4$. Overall, GT outperforms GCN and GAT in detection capabilities in both datasets, making it the best choice.

\begin{table}[t]
\centering
\caption{Comparison with related works.}
\label{tab:comparison}
\resizebox{0.9\columnwidth}{!}{
\begin{tabular}{|cc|c|c|c|} \hline
\multicolumn{2}{|c|}{\textbf{Method}}                              & \multicolumn{1}{c|}{\textbf{$T_o$ (s)}} & \textbf{Model}        & \textbf{Task}                                                   \\ \hline

\multicolumn{2}{|l|}{\emph{Swarm-Net} (CPU)} & $\sim1$ & \multirow{2}{*}{GT} &  \multirow{2}{*}{\begin{tabular}[c]{@{}c@{}}Anomaly\\detection\end{tabular}} \\ \cline{1-3}
\multicolumn{2}{|l|}{\emph{Swarm-Net} (GPU)} & $\sim1$ & & \\\hline

\multicolumn{2}{|l|}{WISE \cite{ammar2018wise}}                    & $3.5$                                                                               & \begin{tabular}[c]{@{}c@{}}Crypto,\\ML\end{tabular}            & \begin{tabular}[c]{@{}c@{}}Checksum,\\clustering\end{tabular}   \\ \hline
\multicolumn{2}{|l|}{FeSA \cite{kuang2022fesa}}                    & $\sim 1$                                                                                 & FL                    & Classification                                                  \\ \hline
\multicolumn{2}{|l|}{Protogerou et. al. \cite{protogerou2021graph}}                   & $> 5$                                                                 & GNN             & \begin{tabular}[c]{@{}c@{}}Anomaly\\detection\end{tabular}      \\ \hline
\multicolumn{2}{|l|}{Rage \cite{chilese2024one}}                   & $t_c+0.15$                                                                              & VGAE                  & \begin{tabular}[c]{@{}c@{}}Anomaly\\detection\end{tabular}      \\ \hline
\multicolumn{2}{|l|}{Aman et. al. \cite{aman2022machine}}          & $\sim 2$                                                                            & MLP                   & Classification                                                  \\ \hline
\multicolumn{2}{|l|}{HAtt \cite{aman2020hatt}}                     & \multicolumn{1}{c|}{$0.126$}                                      & Crypto                & PUF                                                             \\ \hline
\multicolumn{2}{|l|}{SWATT \cite{seshadri2004swatt}}               & $81$                                          & Crypto                 & Checksum   \\ \hline                                                 
\end{tabular}}
\end{table}

\subsection{Simulations}
\label{sec:simulated}

Figure \ref{fig:simulated} shows \emph{Swarm-Net(GT)}'s performance in the simulated scenarios. 

\begin{enumerate}
    \item [1.] \underline{Dropped response ($\text{S}_1$)}: The proposed model has a 99.9\% accuracy in attesting swarms when one node's response is dropped at random. It is, therefore, resistant to single points of failure due to dropped messages.
    \item [2.] \underline{Perturbation ($\text{S}_2$)}: We observe a 95+\% DR for 10 or more bytes of random SRAM perturbation. 
    \item [3.] \underline{Trace replay/out-of-sync states ($\text{S}_3$}): The proposed model achieves an 87.7\% DR in detecting trace replay attacks and out-of-sync network states. Thus, attempts at such threats are likely to be detected.
\end{enumerate}

\subsection{Latency and Memory}
Overhead ($T_o$) may be defined as the summation of the communication overhead ($t_c$), preprocessing ($t_p$), and inference time ($t_i$) following the equation:

\begin{equation}
    T_o = t_c + t_p + t_i
\end{equation}

As shown in Table \ref{tab:comparison}, the overall observed latency in \emph{Swarm-Net} is $\sim 1$ second, most of which is associated with the communication overhead between the verifier, gateway, and swarm. The processing time on the IoT devices (included in $t_c$) is low ($\sim$ milliseconds) since they need not perform complex computation on the SRAM traces, unlike other software-based RA methods \cite{seshadri2004swatt,seshadri2006scuba,khodari2019decentralized}. In addition, the proposed method uses a single SRAM trace for attestation, compared to WISE \cite{protogerou2021graph}, which requires over 5 seconds of network flow information for attestation. Moreover, the communication overhead may change based on the channels used. However, given the data rates ($\sim$ Kbps to Mbps) of existing wireless communication technologies for low-power devices, we expect the overhead to have a similar value. It is worth noting that the communication overhead stays constant for larger swarms since the gateway sends attestation requests in parallel. The evaluation time ($t_p + t_i$) of our best model (GT) is of the order $10^{-5}$ for both CPU- and GPU-based verifiers, which can be attributed to its simple design (compared to the more complex VGAE proposed in RAGE \cite{chilese2024one}). In addition, this simplicity also makes \emph{Swarm-Net} scalable to larger swarm sizes. We also show the setup latency and required storage to give insight into the scalability on the verifier end for a larger number of swarms. The verifier needs around 10 minutes to sample enough traces for training the GNN and incurs $0.036$ seconds (GPU) and $0.563$ seconds (CPU) per epoch (for GT) during the training phase. The proposed model consists of $600,448$ trainable parameters and occupies $\sim2.4$ MB of verifier memory per swarm.


\section{Security Analysis}
\label{sec:security}
We now present a security analysis of the \emph{Swarm-Net} attestation protocol.
\begin{lemma}
\label{l:cons}
    Consistency: The data sections obtained from the same firmware on physical twins behave similarly.
\end{lemma}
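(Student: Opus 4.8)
The plan is to argue the statement by decomposing each data section trace $T_j'$ into a deterministic component fixed entirely by the firmware and a small non-deterministic residual, and then showing the residual is too small to destroy the similarity between twins. Recall that physical twins are, by construction, the same microcontroller variant (the ATmega328P here) loaded with the identical compiled binary. I would therefore first fix the compilation model: the linker emits an identical memory map on both devices, so every global and static variable in the \texttt{.data} and \texttt{.bss} segments, and every heap block requested along a given execution path, occupies the same address and width on both twins. This reduces the claim to a byte-wise comparison of two traces $T_A'$ and $T_B'$ over a common layout of length $d$.

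Next I would establish consistency segment by segment. Initialized globals and statics in \texttt{.data} are copied verbatim from flash by the C runtime startup and are hence byte-identical across twins; uninitialized globals and statics in \texttt{.bss} are zeroed by the same startup routine and therefore agree at power-up; and heap allocations follow the deterministic control flow dictated by the firmware, so the same blocks are requested at the same program points and the heap structure matches. Thus the overwhelming majority of the $d$ bytes are determined solely by the firmware and coincide on the two devices, which is the core of the ``behave similarly'' claim.

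The remaining step is to isolate and bound the non-deterministic residual. The only bytes that can differ are those holding values derived from nondeterministic inputs---sensor noise, RNG output, and timing-dependent state captured at the instant of the snapshot---and for the firmware studied (Tables~\ref{tab:swarm1} and~\ref{tab:swarm2}) these occupy a small, fixed set of variables whose total width is a bounded fraction of $d$. I would formalize ``similar'' through the cosine similarity $CS(T_A',T_B')$ of the scaled traces: since the differing bytes are few and each scaled byte lies in $[0,1]$, the inner-product deficit they induce is small, so $CS(T_A',T_B')$ stays close to $1$ and above the threshold $DT$, whose scaling factor $sf$ is precisely the slack calibrated for this purpose. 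The physical-twin scenarios $\text{P}_1,\text{P}_2$ in Table~\ref{tab:results1} then serve as the empirical confirmation, since a model trained on the development devices attests the twins at $99$--$100\%$.

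The main obstacle is this last step: the residual is genuinely stochastic, so there is no clean worst-case deterministic bound on the byte differences. The argument must instead assume---and the experiments must validate---that the nondeterministic variables constitute only a small, predictable fraction of the data section and that their values stay within the ranges the firmware specifies, so that the induced drop in cosine similarity is provably dominated by the slack built into $DT$ through $sf$. Making that ``small fraction'' quantitative, rather than merely observed, is the crux of turning this from an empirical observation into a bound.
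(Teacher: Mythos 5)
Your proposal is far more ambitious than what the paper actually does: the paper's entire ``proof'' of Lemma \ref{l:cons} is two sentences asserting that identical firmware on devices with the same functionality generates similarly behaving data sections, followed by an appeal to the experimental results on the physical-twin scenarios $\text{P}_1$ and $\text{P}_2$ of the Swarm-1 dataset (Table \ref{tab:results1}). In that sense your final step --- citing the $99$--$100\%$ attestation rates of a model trained on the development devices and evaluated on the twins --- \emph{is} the paper's argument in its entirety; everything before it is your own addition, and it is worth noting that this lemma is offered by the paper as an empirically supported claim, not a deductive result.

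Two caveats on your added material. First, the segment-by-segment determinism argument (\texttt{.data} copied verbatim from flash, \texttt{.bss} zeroed, deterministic heap) establishes byte-identity only at reset, whereas the traces $T_j'$ are captured during runtime, after the firmware has overwritten those segments with values derived from randomly generated data, received messages, and timing-dependent state. In this paper's firmware (Tables \ref{tab:swarm1} and \ref{tab:swarm2}) the sense nodes explicitly generate random floating-point values, so a nontrivial set of bytes will \emph{not} coincide across twins even under perfect synchronization; ``consistency'' here can only mean distributional similarity of SRAM behavior, not near-equality of byte strings, which weakens your claim that ``the overwhelming majority of the $d$ bytes \ldots coincide.'' Second, your formalization of similarity as $CS(T_A',T_B')$ between the two twins' traces is not the quantity the protocol thresholds: Algorithm \ref{algo:attest} compares each trace with its own GNN reconstruction, $CS(x,\hat{x}) > DT$, so the operative form of the lemma is that twin traces lie inside the distribution the GNN learned from the development devices, not that twin traces are mutually close. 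Finally, the quantitative gap you honestly flag --- bounding the stochastic residual so that the similarity provably stays above $DT$ --- is real, and the paper makes no attempt to close it either; it is exactly the step that both your argument and the paper's ultimately replace with the $\text{P}_1$/$\text{P}_2$ measurements.
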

The same firmware loaded on two different devices with the same functionality can generate data sections following similar behavior. The results of the proposed attestation algorithm on $P_1$ and $P_2$ from the Swarm-1 dataset support this claim.

\begin{lemma}
\label{l:dist}
    Distinguishability: Only an authorized firmware can generate an authentic SRAM trace using its own SRAM.
\end{lemma}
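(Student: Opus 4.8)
The plan is to argue the contrapositive: any firmware that is \emph{not} authorized produces an SRAM data section that the trained reconstructor $G_\theta$ cannot faithfully regenerate, so its cosine similarity falls below the decision threshold $DT$ and the trace is flagged as anomalous. Establishing this requires two ingredients. First, I would invoke the \emph{distinguishability property} of the SRAM data section established empirically in \cite{aman2022machine}: distinct firmware versions running on the same device induce distinct runtime data sections, since altering the program changes the initialized and uninitialized globals and the dynamically allocated state captured in the .data, .bss, and heap regions. This guarantees that an unauthorized firmware's trace $x'$ lies outside the authentic graph distribution $\mathbb{X}$ on which $G_\theta$ was trained.

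Next I would connect this out-of-distribution placement to the attestation test. Because $G_\theta$ is trained as a denoising reconstructor on samples drawn only from $\mathbb{X}$---with the scaled uniform perturbation $\tilde{x}=x+k\cdot\epsilon$ forcing it to map inputs back onto the authentic output distribution regardless of the noise---it reconstructs in-distribution traces with high fidelity but collapses out-of-distribution traces toward the learned authentic manifold. For an authentic trace $x$ this yields $CS(x,\hat{x})>DT$ by the very construction of the per-node threshold $DT_j = sf\cdot\min_i CS(X_j,\hat{X_j})$, whereas for an unauthorized trace $x'$ the reconstruction $\hat{x}'=G_\theta(x')$ differs substantially from $x'$, driving $CS(x',\hat{x}')\le DT$. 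Chaining these two steps gives the claim: a trace can be authentic (pass the threshold) only if it was generated by the authorized firmware. I would close the argument, as with Lemma~\ref{l:cons}, by pointing to the empirical detection rates---$100\%$ DR on the anomalous-firmware scenarios $\text{AN}_j$ in Tables~\ref{tab:results1} and \ref{tab:results2}---as direct evidence that the authentic and unauthorized regimes are in fact separated by $DT$.

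The hard part will be that a denoising reconstructor offers no a~priori mathematical guarantee that every out-of-distribution input is reconstructed poorly; a sufficiently expressive model can occasionally generalize to nearby malicious traces, which is precisely the failure mode behind the lower detection rates of GCN and GAT on propagated anomalies. I would therefore frame distinguishability as a property that holds under the trained $G_\theta$ and the conservative threshold scaling $sf=0.999$, arguing that the denoising objective deliberately suppresses the identity mapping so reconstruction tracks $\mathbb{X}$ rather than the raw input, and that the $\min$-based threshold leaves essentially no slack for an anomalous trace to masquerade as authentic. A fully formal statement would require a separation assumption bounding the reconstruction gap between $\mathbb{X}$ and its complement away from the $sf$-scaled margin; I would state this assumption explicitly and defer its quantitative validation to the measured $AR$ and $DR$ figures rather than attempt a distribution-free proof.
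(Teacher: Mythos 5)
Your proposal is correct in the same (empirical) sense as the paper's argument, but it takes a genuinely different and more elaborate route. The paper's entire justification is your first ingredient alone: modifying firmware necessarily perturbs the variables and control dependencies reflected in the .data, .bss, and heap regions, so the SRAM data section behaves observably differently, with the detection results on the node-level anomalies $\text{AN}_j$ cited as empirical support. In the paper, the lemma is thus a statement about the SRAM itself, deliberately independent of any detector. You instead route the argument through the attestation machinery---the denoising reconstructor $G_\theta$, the threshold construction $DT_j = sf \cdot \min_{i} CS(X_j,\hat{X}_j)$, and the claim that out-of-distribution traces reconstruct poorly---and end up proving the detector-level statement that unauthorized traces fail the threshold test. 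In the paper's architecture that statement is not this lemma but the later Attestation theorem, which is proved \emph{from} this lemma; folding the detector into the lemma inverts that layering and makes the lemma's justification depend on exactly the machinery it is supposed to underwrite. What your route buys is honesty: you explicitly flag that a denoising reconstructor carries no a priori guarantee of poor reconstruction on every out-of-distribution input, and that a formal claim requires an explicit separation assumption---a gap the paper never acknowledges, and one its own GCN/GAT results on propagated anomalies actually exhibit. What the paper's route buys is modularity and a weaker but cleaner claim that can serve as a building block. Neither is a proof in the mathematical sense; both ultimately bottom out in the same tables of $AR$ and $DR$ figures.
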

Changes made to the variables and control dependencies in the firmware create observable differences in the SRAM behavior, as supported by the results on node-level anomalies $\text{AN}_j$ in Section \ref{sec:results}. Thus, an unauthorized firmware cannot generate a valid $T'$ using its own SRAM.

\begin{theorem}
    Mutual Authentication: Completing one protocol run implies that the verifier, gateway, and IoT node have done so with a legitimate counterpart.   
\end{theorem}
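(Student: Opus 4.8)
The plan is to reduce three-party mutual authentication to a pair of two-party guarantees and then compose them. First I would split the claim into (i) authentication between $ID_V$ and $ID_G$ and (ii) authentication between $ID_G$ and each prover $\text{N}_j$. Part (i) is immediate from the network model, which assumes a secure channel between the verifier and gateway; hence any message $ID_V$ accepts as coming from $ID_G$ (and conversely) indeed originates from the legitimate counterpart. The substance of the proof lies in part (ii), which I would establish by reducing impersonation to a forgery of the keyed MAC. Concretely, I would assume the HMAC is existentially unforgeable under chosen-message attack and invoke threat-model assumption~6, so that the adversary holds neither the shared key $K_j$ nor the current round nonce $C_j$.

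For the gateway-to-node direction, $\text{N}_j$ accepts the request only if the received tag $I_{req,j}$ verifies under $K_j$ over a message binding the stored nonce $C_j$; since a valid tag on a fresh $(K_j,C_j)$ pair cannot be produced without $K_j$, acceptance implies the request was issued by the legitimate gateway in the current round. Symmetrically, for the node-to-gateway direction, $ID_G$ accepts $m_{resp,j}$ only after $I_{resp,j}$ verifies under $K_j$ and the decrypted payload echoes the expected $C_j$ together with the prover's fresh nonce $C$; by the same unforgeability argument this certifies that the response was produced by the legitimate $\text{N}_j$, while the fresh $C$ evidences its liveness. The final update leg closes the loop: $\text{N}_j$ accepts $C_{new,j}$ only when $I_{update,j}$ verifies under $K_j$ and the payload echoes its own challenge $C$, re-authenticating the gateway and confirming agreement on the next-round nonce.

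I would then dispatch each adversarial capability against this skeleton. Eavesdropping and man-in-the-middle tampering cannot help, because confidentiality follows from the symmetric encryption and any modification invalidates the corresponding HMAC; message dropping (assumption~3) affects only availability, not authentication, and is handled separately by the default traces $T_{def}$. Replay is the delicate case: a replayed request (assumption~4) carries a stale $C_j$ and fails the freshness check once $C_j$ has advanced to $C_{new,j}$, while a replayed response (assumption~5) binds an old $C_j$ that no longer matches the gateway's stored value. The monotone per-round nonce update in steps~4 and~5a is what severs the link between rounds and defeats both replays.

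I expect the main obstacle to be arguing replay resistance rigorously in the face of the nonce-synchronization step, since an adversary who drops the final update message (step~5a) can leave $\text{N}_j$ and $ID_G$ with different notions of the current nonce, and I must show that any such desynchronization degrades to loss of availability rather than an authentication bypass --- this is precisely where the resynchronization nonce set $L_R$ must be invoked. Once the two pairwise guarantees and their freshness are established, composition is routine: a complete protocol run requires every verification in steps~2--5 to succeed, so each accepting party has interacted with a legitimate counterpart, and chaining the $ID_V$--$ID_G$ channel with the $ID_G$--$\text{N}_j$ authentication yields end-to-end mutual authentication across all three layers.
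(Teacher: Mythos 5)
Your proof is correct and follows essentially the same route as the paper's: the verifier--gateway leg is discharged by the secure-channel assumption, and gateway--node authentication in both directions reduces to the impossibility of producing valid $I_{req,j}$, $I_{update,j}$, $m_{update,j}$ (gateway impersonation) or $I_{resp,j}$, $m_{resp,j}$ (node impersonation) without the shared secret $K_j$ and the round nonce $C_j$. The additional material you develop on replay freshness and on desynchronization via the resynchronization set $L_R$ is sound but goes beyond the paper's proof of this theorem, since the paper defers those threats to its separate Availability and Attestation theorems.
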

\begin{proof}
Since the communication between the verifier and gateway is assumed to be secure, an adversary may only impersonate either (1) the trusted gateway or (2) an IoT node $\text{N}_j$.

In case (1), the adversary must furnish a valid parameter $I_{req,j}$ and subsequently $I_{update,j}$ and $m_{update,j}$ to $\text{N}_j$, which is not possible without the knowledge of the shared secret key, $K_{j}$, and the nonce for that attestation round, $C_j$.

In case (2), the adversary must furnish a valid parameter $I_{resp,j}$ and message $m_{resp,j}$ to the gateway, which is not possible without knowledge of the shared secret key, $K_{j}$, and the updated nonce for the attestation round, $C_j$.
\end{proof}

\begin{theorem}
    Availability: A registered swarm is always available.
\end{theorem}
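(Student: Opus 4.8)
The plan is to enumerate the availability-threatening actions available to the adversary in the threat model and show that each is either rejected outright or recoverable, so that a registered swarm can always complete (or be restored to complete) an attestation round. The relevant threats are (3) dropping messages to induce out-of-sync states or round desynchronization, (4) denial-of-service by impersonating the gateway and replaying previously authorized attestation requests, and the related benign failure mode of dropped attestation responses. I would structure the argument as a case analysis over these three, each closed by a dedicated mechanism in the protocol of Section~\ref{sec:proposed}.

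First I would dispatch the replay-based denial-of-service of Threat~4. Because each request HMAC $I_{req,j}$ is bound to the round nonce $C_j$, which is refreshed to $C_{new,j}$ at every round (Steps~4 and~5a of the protocol), any replayed request carries a stale nonce and fails the HMAC verification at $\text{N}_j$; by the Mutual Authentication theorem (Theorem~1) the adversary cannot forge a fresh $I_{req,j}$ without $K_j$ and the current $C_j$, which Threat~6 assumes the adversary cannot obtain. Hence no IoT node is coerced into servicing spurious requests and swarm resources are not exhausted. Next I would treat dropped responses, the single-point-of-failure concern: if the response of some $\text{N}_j$ is lost or fails to arrive before the timeout, Algorithm~\ref{algo:attest} substitutes the precomputed default trace $T_{def,j}$, so $ID_V$ still emits a trust decision for the remaining respondents instead of aborting the round; the simulated scenario $\text{S}_1$ in Section~\ref{sec:simulated} empirically corroborates this with $99.9\%$ accuracy under a randomly dropped response.

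The main obstacle is Threat~3, round desynchronization. If the adversary drops the update message in Step~5a, the gateway has already advanced its stored nonce to $C_{new,j}$ while $\text{N}_j$ retains the old $C_j$; thereafter every $I_{req,j}$ fails at $\text{N}_j$ and the node is effectively excluded indefinitely, which would defeat availability in the long run rather than for a single round. Here I would invoke the resynchronization-nonce set $L_R$ that $ID_G$ distributes to the swarm during the training phase: upon detecting a persistent non-response from $\text{N}_j$, $ID_G$ issues a resync handshake keyed on the next unused element of $L_R$, re-establishing a common nonce and returning the node to the round. The delicate point to argue is that this recovery path must itself resist replay and impersonation, so it must likewise be authenticated under $K_j$ and each $L_R$ element consumed exactly once; this again reduces to Threat~6 and Theorem~1. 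Combining the three cases---rejection of replayed requests, graceful substitution for dropped responses, and nonce resynchronization for desynchronization---shows that no adversarial action in the threat model can render a registered swarm permanently unavailable, establishing the claim.
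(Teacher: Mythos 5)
Your proposal is correct and follows essentially the same route as the paper's proof: a case analysis over the adversary's availability-threatening actions, dispatching replayed/forged requests via the secrecy of $K_j$ and the per-round refresh of $C_j$, dropped responses via the default-trace set $T_{def}$ (corroborated by the $\text{S}_1$ results), and desynchronization via the resynchronization nonces $L_R$. The only differences are organizational---you merge the paper's separate DoS and request-replay cases into one (mirroring Threat~4) and split its message-dropping case into dropped responses versus nonce desynchronization, while adding the sensible observation that the $L_R$ recovery path must itself be authenticated---none of which changes the substance of the argument.
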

\begin{proof}
To affect the availability of a registered swarm, an adversary may do one of three things: (1) denial of service on the swarm, (2) replay old attestation requests from the gateway, or (3) drop messages between the gateway and the swarm.

In case (1), an adversary may launch a denial of service on the swarm by sending frequent authentication requests to the swarm, which is not possible without the knowledge of $K_{j}$ and $C_j$ between the gateway and each $\text{N}_j$ in the swarm.

In case (2), an adversary may replay previously valid attestation parameters $I_{req,j}$ to each $\text{N}_j$ in the swarm. However, these parameters will not be valid for subsequent attestation rounds as the nonce $C_j$ is updated by the gateway and the swarm nodes after every attestation round.

Lastly, in case (3), an adversary may drop messages between the gateway and the swarm, causing a state of desynchronization. However, the gateway maintains a list of valid nonces $L_R$ as mentioned in Section \ref{sec:proposed}, which it uses to resynchronize the devices in future attestation rounds. Furthermore, the verifier maintains a list of default traces $T_{def}$, which it uses to replace missing responses during attestation, thereby avoiding single points of failure as shown by the results on dropped responses in Section \ref{sec:simulated}.
\end{proof}

\begin{theorem}
    Attestation: Successful attestation by Algorithm \ref{algo:attest} proves that the provers have authentic firmware.
\end{theorem}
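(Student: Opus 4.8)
The plan is to prove soundness at two levels---first the cryptographic protocol level and then the learning-based detection level---by chaining the Mutual Authentication theorem with the two preceding lemmas. Concretely, I would observe that for a node $\text{N}_j$ to be flagged authorized ($0$) by Algorithm \ref{algo:attest}, two conditions must hold simultaneously: (i) the data section $T_j'$ that reaches $ID_V$ inside $SR$ genuinely originates from $\text{N}_j$'s own SRAM, and (ii) the reconstruction score satisfies $CS_j > DT_j$. I would then argue that each condition can be met only by a node running authentic firmware.

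First I would dispatch condition (i). An adversary hoping to slip malicious firmware past attestation could either replay a previously captured authentic response or forge a fresh one. By the Mutual Authentication theorem, the gateway accepts $m_{resp,j}$ and $I_{resp,j}$ only when they carry a valid HMAC under $K_{j}$ and the current-round nonce $C_j$; by threat-model assumption 6 the adversary holds neither. Replay is likewise defeated because $C_j$ is refreshed after every round (Steps 4--5a of the protocol), so a stale response fails nonce verification at the gateway and never enters $SR$. Hence any $T_j'$ that survives to the verifier's GNN was emitted by the legitimate $\text{N}_j$ executing on its actual volatile memory.

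Next I would handle condition (ii) using Distinguishability (Lemma \ref{l:dist}) together with the training objective of Section \ref{sec:training}. Because $G_\theta$ is optimized to reconstruct only the authentic distribution $\mathbb{X}$, an authentic trace reconstructs with high fidelity and clears its threshold, whereas by Lemma \ref{l:dist} a trace produced by unauthorized firmware deviates observably from $\mathbb{X}$, yielding $CS_j \le DT_j$ and a flag of $1$. Combining the two conditions, a flag of $0$ forces $\text{N}_j$'s genuine SRAM trace to lie within the normal distribution, which by distinguishability is possible only for authentic firmware---establishing the claimed soundness. The dropped-response mechanism is benign here, since substituting $T_{def,j}$ affects only the silent node and cannot manufacture a spurious $0$ for a node whose malicious response was actually delivered.

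The main obstacle, I expect, is that condition (ii) rests on an empirical rather than information-theoretic guarantee. Lemma \ref{l:dist} and the $100\%$ detection rates reported in Section \ref{sec:results} justify the separation between authentic and anomalous cosine scores, but unlike the nonce-based argument for (i) there is no hard cryptographic bound precluding a crafted trace that both authenticates correctly and reconstructs well. I would therefore phrase the detection step as contingent on the distinguishability property and the threshold calibration ($sf = 0.999$), making explicit that the theorem's strength inherits the GNN's demonstrated detection performance rather than an absolute hardness assumption.
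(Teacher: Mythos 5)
Your decomposition is close in spirit to the paper's but routes the replay half of the argument differently. The paper's proof splits by adversarial action: (1) a device loaded with malicious firmware tries to generate or forge a passing trace, ruled out by Lemma \ref{l:dist} and backed empirically by the detection results in Section \ref{sec:results} and the random-perturbation simulation $\text{S}_2$; and (2) the adversary captures and replays a valid trace, countered by encryption and, crucially, by a \emph{detection} backstop: replayed traces create out-of-sync network states that the GNN flags (the $\text{S}_3$ results in Section \ref{sec:simulated}). You instead split by the two conditions needed for an authorized flag, handling provenance of $T_j'$ purely cryptographically by chaining the Mutual Authentication theorem and nonce refresh, and handling threshold clearance via Lemma \ref{l:dist}. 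Reusing Mutual Authentication is an economy the paper does not exploit here, and your explicit acknowledgment that the detection step is empirical rather than information-theoretic is more candid than the paper's own phrasing.

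There is, however, a genuine gap in your condition (i). The HMAC and nonce machinery authenticates the \emph{sender} and the \emph{freshness of the message}; it does not bind the payload to the node's actual, current SRAM contents. The adversary this theorem must defeat is precisely one who controls the firmware executing on $\text{N}_j$, so whatever key material the response path uses is being wielded by untrusted code: such a node can wrap a stored pre-infection trace, or a crafted trace, inside a perfectly fresh, validly-authenticated $m_{resp,j}$. Your claim that any $T_j'$ surviving to the verifier's GNN ``was emitted by the legitimate $\text{N}_j$ executing on its actual volatile memory'' therefore does not follow from the protocol alone, and it is strictly stronger than what the paper claims --- the paper concedes a replay may succeed at the channel level and argues it is then \emph{detected}, because the stale trace is out of sync with the rest of the swarm ($\text{S}_3$, $87.7\%$ DR) and forged traces are caught by the perturbation results ($\text{S}_2$). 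You gesture at this in your closing caveat, but it must be part of the case analysis, not a remark: without folding the $\text{S}_2$/$\text{S}_3$ backstops into the argument, the scenario of a compromised prover submitting a syntactically valid but semantically stale or doctored trace is left uncovered.
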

\begin{proof}
An adversary may attempt the following: (1) generate an authentic trace from the SRAM of a device loaded with malicious firmware, and (2) attempt to capture and replay a valid trace.

Case (1) is impossible, as stated in Lemma \ref{l:dist}; malicious firmware cannot generate a valid trace using the microcontroller's SRAM. Furthermore, traces generated from malicious firmware are easily detected by Algorithm \ref{algo:attest} as supported by the results in Section \ref{sec:results}. Finally, attempts to forge an SRAM trace may be unsuccessful based on the results of random perturbation in Section \ref{sec:simulated}.

Further, in case (2), the adversary cannot access the valid traces since the communication between the verifier and the swarm is encrypted. However, if an adversary succeeds, replaying old traces will create out-of-sync network states, which are detectable during attestation based on the results of trace replay attacks in Section \ref{sec:simulated}. 

\end{proof}

\section{Conclusion}
\label{sec:conclusion}
This paper proposed the first SRAM-based swarm attestation approach, \emph{Swarm-Net}, that exploited the graph-like structure of IoT swarms using GNNs. It presented the first datasets on SRAM-based attestation that cover various complicated node-level and inter-node relationships. In addition, a secure protocol was proposed that ensured confidentiality, integrity, mutual authentication, and attestation. \emph{Swarm-Net} achieved a 99.9\% overall accuracy across all types of behaviors ranging from normal firmware to anomalous firmware and propagated anomalies. It was also tested on simulated scenarios, such as dropped responses, trace replay attacks, and SRAM perturbation, which showed resistance to such attacks. Latency evaluation showed an overhead and evaluation latency of the order of $1$ second and $10^{-5}$ seconds, respectively. Lastly, a security analysis highlighted security against impersonation, replay attacks, denial of service, dropped messages, malicious firmware, and propagated anomalies. Future studies can attempt swarm attestation using a smaller number of data section bytes. Furthermore, while this study uses single SRAM responses for attestation, time-series information captured by sequences of SRAM traces may help detect intermittent malicious activity and can be explored by future studies.


\bibliographystyle{IEEEtranN}
{\footnotesize\bibliography{references}}

\end{document}